\newtheorem{theorem}{Theorem}
\newtheorem{remark}{Remark}
\newtheorem{lemma}{Lemma}
\newtheorem{assumption}{Assumption}
\newtheorem{definition}{Definition}
\def\argmin{\mathop{\rm argmin}}
\def\BibTeX{{\rm B\kern-.05em{\sc i\kern-.025em b}\kern-.08em
    T\kern-.1667em\lower.7ex\hbox{E}\kern-.125emX}}
\begin{document}
\title{Differentially-Private Distributed Model Predictive Control of Linear Discrete-Time Systems with Global Constraints}
\author{Kaixiang Zhang, \IEEEmembership{Member, IEEE}, Yongqiang Wang, \IEEEmembership{Senior Member, IEEE}, Ziyou Song, \IEEEmembership{Senior Member, IEEE}, Zhaojian Li, \IEEEmembership{Senior Member, IEEE}\vspace{-1.6em}
\thanks{This work was supported in part by National Science Foundation (NSF) under Grant 2045436. The work of Yongqiang Wang was supported in part by NSF under Grant 2219487. \textit{(Corresponding author: Zhaojian Li.)}}
\thanks{Kaixiang Zhang and Zhaojian Li are with the Department of Mechanical Engineering, Michigan State University, East Lansing, MI 48824, USA (e-mail: zhangk64@msu.edu, lizhaoj1@egr.msu.edu).}
\thanks{Yongqiang Wang is with the Department of Electrical and Computer Engineering, Clemson University, Clemson, SC 29634, USA (e-mail: yongqiw@clemson.edu).}
\thanks{Ziyou Song is with the Department of Electrical Engineering and Computer Science, University of Michigan, Ann Arbor, MI 48109, USA (e-mail: ziyou@umich.edu).}
}

\maketitle

\begin{abstract}
Distributed model predictive control (DMPC) has attracted extensive attention as it can explicitly handle system constraints and achieve optimal control in a decentralized manner. However, the deployment of DMPC strategies generally requires the sharing of sensitive data among subsystems, which may violate the privacy of participating systems. In this paper, we propose a differentially-private DMPC algorithm for linear discrete-time systems subject to coupled global constraints. Specifically, we first show that a conventional distributed dual gradient algorithm can be used to address the considered DMPC problem but cannot provide strong privacy preservation. Then, to protect privacy against the eavesdropper, we incorporate a differential-privacy noise injection mechanism into the DMPC framework and prove that the resulting distributed optimization algorithm can ensure both provable convergence to a global optimal solution and rigorous $\epsilon$-differential privacy. In addition, an implementation strategy of the DMPC is designed such that the recursive feasibility and stability of the closed-loop system are guaranteed. Simulation results are provided to demonstrate the effectiveness of the developed approach.
\end{abstract}

\begin{IEEEkeywords}
	Distributed model predictive control, privacy preservation, differential privacy.
\end{IEEEkeywords}

\vspace{-10pt}
\section{Introduction}
Over the past decades, model predictive control (MPC) has achieved great success due to its ability to explicitly handle system constraints and ensure desired control performance~\cite{Mayne2014AUTO}. MPC can be implemented in either a centralized or distributed manner. Centralized MPC requires a central unit to process all system information, making it computationally intensive and less scalable for large-scale systems. Consequently, distributed MPC (DMPC) has emerged as a promising alternative, offering the advantages of distributed systems and having been effectively applied in various areas~\cite{Hans2018TSE,Luis2020RAL}.

DMPC studies can be roughly categorized based on the type of couplings between subsystems: cost function couplings, system dynamics couplings, and constraint couplings. This paper focuses on systems with coupled global constraints, which have many real-world applications~\cite{BaiTVT2023,AlizadehSPM2012,Notarstefano2019FT}. Existing methods address coupled constraints through techniques like sequential optimization~\cite{Richards2007IJC} and parallel computation~\cite{Trodden2014SCL}. However, global optimality remains unclear in these approaches. 
To achieve global optimality, a DMPC scheme based on distributed alternating direction multiplier method (ADMM) is proposed in~\cite{Wang2017AUTO}. Other extensions include a push-sum dual gradient algorithm for time-varying directed networks~\cite{Jin2020TAC}, a noisy ADMM algorithm for handling communication noise~\cite{Li2021AUTO}, and a primal-dual algorithm designed with contraction theory~\cite{SuTCYB2022}.

The aforementioned methods~\cite{Wang2017AUTO,Jin2020TAC,Li2021AUTO,SuTCYB2022} employ distributed optimization to address DMPC problems, requiring subsystems to share local information to meet coupled global constraints. However, these shared messages and global constraints often contain sensitive data, raising concerns about privacy leakage. Eavesdroppers could wiretap communications and deduce private information, potentially leading to safety risks and economic losses. For example, in DMPC for automated vehicles~\cite{BaiTVT2023}, global constraints are formulated with each vehicle's position and velocity, which are sensitive and should be protected from disclosure. Similarly, in demand-side management for smart grids~\cite{AlizadehSPM2012}, global constraints capture the relationship between aggregated customer loads and the power bid, potentially revealing proprietary consumption patterns. As such, ensuring privacy protection in DMPC is essential for both security and practical deployment. While few results address privacy in DMPC, privacy-preserving methods for distributed optimization are well established. For the latter, a common technique is homomorphic encryption, which conceals sensitive information via cryptography~\cite{Lu2018AUTO,Zhang2018TCNS} and can be extended to DMPC~\cite{Zhao2023TCNS}.
However, this technique generally incurs high communication and computation overhead due to complex encryption and decryption processes. In contrast, differential privacy (DP) offers a lightweight alternative with strong theoretical guarantees. DP-based methods have been applied to distributed optimization by adding persistent noise to objective functions~\cite{Nozari2016TCNS} or shared information~\cite{Huang2015,Xiong2020TCNS,Ding2021TAC}. Nevertheless, the direct injection of DP noise to existing algorithms inevitably compromises optimization performance, resulting in a trade-off between accuracy and privacy. Note that extending DP-based methods to DMPC is particularly challenging, as the compromise on optimization accuracy can impair control performance and potentially violate constraints.

In this paper, a differentially-private DMPC algorithm is designed for linear discrete-time systems with coupled global constraints. We first demonstrate the need for privacy preservation by showcasing that a conventional distributed dual gradient algorithm for DMPC is vulnerable to eavesdropping attacks. A DP noise injection mechanism is then introduced into the distributed dual gradient algorithm to obscure exchanged private information. Leveraging results from~\cite{WangTAC2024,WangTAC2023}, we carefully design weakening factor and step-size sequences to effectively mitigate the influence of DP noise. Rigorous analysis shows that the proposed algorithm can ensure almost sure convergence to a global optimal solution and maintain $\epsilon$-differential privacy with a finite cumulative privacy budget. Aligned with the privacy-preserving distributed algorithm, we provide an implementation strategy for DMPC, ensuring the recursive feasibility and stability of the closed-loop system. Simulations are performed to validate the efficacy of the proposed scheme. Our work differs significantly from existing DMPC methods~\cite{Wang2017AUTO,Jin2020TAC,Li2021AUTO,SuTCYB2022} and DP-based privacy methods~\cite{Huang2015,Xiong2020TCNS,Ding2021TAC,WangTAC2024,WangTAC2023}. First, unlike~\cite{Wang2017AUTO,Jin2020TAC,Li2021AUTO,SuTCYB2022}, which primarily focus on control design and are susceptible to eavesdropping, we integrate a novel noise injection mechanism into the DMPC framework to address both control and privacy issues. Second, in contrast to algorithms in~\cite{Huang2015,Xiong2020TCNS,Ding2021TAC} that directly add DP noise to exchanged information and have to trade convergence accuracy for privacy, our algorithm uses weakening factor and step-size sequences to minimize the adverse impact of DP noise, achieving both a finite cumulative privacy budget and guaranteed convergence. Third, \cite{WangTAC2024,WangTAC2023} are designed for distributed optimization, and cannot be applied to DMPC problems. Our work borrows ideas from \cite{WangTAC2024,WangTAC2023} to construct tailored weakening factor and step-size sequences, and develops a DMPC-specific implementation strategy to extend the applicability of these techniques to the distributed control framework.

The rest of this paper is organized as follows. In Section~\ref{sec:problem}, the preliminaries of DMPC and DP are introduced. In Section~\ref{sec:DP}, a new differentially-private distributed dual gradient algorithm is developed, and convergence analysis is conducted. Section~\ref{sec:MPC} presents the implementation strategy of DMPC. Finally, a numerical study is given in Section~\ref{sec:simulation}, and concluding remarks are summarized in Section~\ref{sec:conclusion}.


{\bf Notations:}
$\mathbb{R}^n$ stands for the $n$-dimensional Euclidean space, and $\mathbb{R}_{+}^{n}$ is the non-negative orthant of $\mathbb{R}^n$. Given two integers $a$ and $b$ ($a< b$), $\mathbb{Z}_{a}^{b}$ represents the set $\{a, a+1, \cdots, b\}$. $I_n$ denotes the identity matrix of dimension $n$. ${\bf 1}_n$ and ${\bf 0}_n$ represent the $n$-dimensional column vector with all entries being 1 and 0, respectively. We use $Q>(\ge) 0$ to denote that $Q$ is a positive definite (semi-definite) matrix. $\|x\|$ and $\|x\|_{1}$ represent the standard Euclidean norm and the $L_{1}$ norm of a vector $x$, respectively. Moreover, $\|x\|_{Q}^{2}:=x^{\top}Qx$. The Euclidean projection of vector $x$ on a convex set $\mathcal{X}\subset \mathbb{R}^{n}$ is denoted by $\Pi_{\mathcal{X}} \left[ x \right]:=\argmin_{h\in \mathcal{X}} \|h-x\|$.

\vspace{-5pt}
\section{Problem Formulation and Preliminaries} \label{sec:problem}
\subsection{Problem Description}
Consider $M$ linear discrete-time subsystems, each described as follows:
\abovedisplayskip= 3pt 
\belowdisplayskip= 2.5pt
\begin{equation} \label{equ:LDT}
{\small
	\begin{aligned}
		x_{i}(t+1) = A_{i}x_{i}(t)+B_{i}u_{i}(t), \quad i\in \mathbb{Z}_{1}^{M},
	\end{aligned}
}
\end{equation}
where $x_{i}(t)\in \mathbb{R}^{n_{i}}$ and $u_{i}(t)\in \mathbb{R}^{m_{i}}$ are the state and control input of subsystem $i$ at time instant $t$, respectively. Each subsystem $i$ should satisfy local constraints $x_{i}(t)\in \mathcal{X}_{i}$ and $u_{i}(t)\in \mathcal{U}_{i}$, with $\mathcal{X}_{i} \subset \mathbb{R}^{n_{i}}$ and $\mathcal{U}_{i}\subset \mathbb{R}^{m_{i}}$ being state and input constraint sets, respectively. Moreover, all the subsystems are subject to $p$ global constraints described by
\begin{equation} \label{equ:globalConstraint}
{\small
	\begin{aligned}
		\sum_{i=1}^{M} (\varPsi_{x_{i}}x_{i}(t)+\varPsi_{u_{i}}u_{i}(t))\le {\bf 1}_{p},
	\end{aligned}
}
\end{equation}
where $\varPsi_{x_{i}}\!\in \! \mathbb{R}^{p\times n_i}$ and $\varPsi_{u_{i}}\!\in \! \mathbb{R}^{p\times m_i}$ are some given~matrices. The coupled linear constraints arise in many practical multi-agent systems, such as safety limits in automated vehicles~\cite{BaiTVT2023} and demand constraints in smart grids~\cite{AlizadehSPM2012}.
\begin{assumption} \label{assump:system}
Each subsystem, i.e., $(A_{i}, B_{i})$, is controllable. Additionally, $\mathcal{X}_{i}$ and $\mathcal{U}_{i}$ are bounded and closed polytopes which contain the origins as their inner point.
\end{assumption}

The DMPC problem is formulated as~\cite{Wang2017AUTO,Jin2020TAC,Li2021AUTO,SuTCYB2022}
\begin{subequations} \label{equ:DMPC}
\begin{align}
	\mathcal{P}:\quad &\min_{\{\tilde{{\bm{u}}}_{1}, \cdots, \tilde{{\bm{u}}}_{M}\}} \sum_{i=1}^{M}J_{i}(x_{i}(t), \tilde{{\bm{u}}}_{i}) \label{equ:DMPC_J}
	\\ 
	&\; \mathrm{s.t.}\quad \tilde{{\bm{u}}}_{i}\in \tilde{\mathcal{U}_{i}}(x_{i}(t)), \quad \sum_{i=1}^{M} f_{i}(x_{i}(t), \tilde{{\bm{u}}}_{i}) \le b(\varepsilon). 
	\label{equ:DMPC_local}
\end{align}
\end{subequations}
In \eqref{equ:DMPC_J}, $J_{i}(x_{i}(t), \tilde{{\bm{u}}}_{i})$ is the local objective function, which is defined as
\begin{equation}
{\small
	\begin{aligned}
		J_{i}(x_{i}(t), \tilde{{\bm{u}}}_{i})\!:=\!\sum_{\ell=0}^{N-1} \! \left( \|\tilde{x}_{i}(\ell|t)\|_{Q_{i}}^{2} \!+\! \|\tilde{u}_{i}(\ell|t)\|_{R_{i}}^{2} \right) \!+\! \|\tilde{x}_{i}(N|t)\|_{P_{i}}^{2},
	\end{aligned}
}
\end{equation}
where $N\in \mathbb{Z}_{>0}$ is the length of prediction horizon, $\tilde{x}_{i}(\ell|t)$ and $\tilde{u}_{i}(\ell|t)$ are the $\ell$th step predicted state and control input at time instant $t$, respectively, $\tilde{{\bm{u}}}_{i}:=\{\tilde{u}_{i}(0|t), \cdots, \tilde{u}_{i}(N-1|t)\}$ stands for the predicted input sequence over the prediction horizon, and $Q_{i}>0$, $R_{i}>0$, and $P_{i}>0$ are weight matrices. For each subsystem $i$, $P_{i}$ is the solution of the following algebraic Riccati equation: 
\begin{equation} \label{equ:ARE}
{\small
	\begin{aligned}
		(A_{i}+B_{i}K_{i})^{\top}P_{i}(A_{i}+B_{i}K_{i})-P_{i} = -(Q_{i}+K_{i}^{\top}R_{i}K_{i}),
	\end{aligned}
}
\end{equation}
where $K_{i}:=-(R_i+B_{i}^{\top}P_{i}B_{i})^{-1}B_{i}^{\top}P_{i}A_{i}$.
The local constraint set $\tilde{\mathcal{U}_{i}}(x_{i}(t))$ in~\eqref{equ:DMPC_local} is formulated as
\begin{equation}\label{equ:U}
{\small
	\begin{aligned}
		&\tilde{\mathcal{U}_{i}}(x_{i}(t)):= \{\tilde{{\bm{u}}}_{i}\in\mathbb{R}^{m_{i}N}:
		\tilde{x}_{i}(\ell+1|t) = A_{i}\tilde{x}_{i}(\ell|t)+B_{i}\tilde{u}_{i}(\ell|t), 
		\\
		&\tilde{x}_{i}(0|t)\!=\!x_{i}(t), \tilde{x}_{i}(\ell|t)\!\in \! \mathcal{X}_{i}, \tilde{u}_{i}(\ell|t)\!\in \! \mathcal{U}_{i}, \tilde{x}_{i}(N|t)\!\in \! \mathcal{X}_{i}^{f}, \ell \!\in \! \mathbb{Z}_{0}^{N-1}\},
	\end{aligned}
}
\end{equation}
with $\mathcal{X}_{i}^{f}$ being the terminal constraint set. In addition, the global coupled constraint in~\eqref{equ:DMPC_local} is a tightened form of the constraint in~\eqref{equ:globalConstraint}, and $f_{i}(x_{i}(t), \tilde{{\bm{u}}}_{i})$ and $b(\varepsilon)$ are given by
\begin{equation} \label{equ:fb}
{\small
	\begin{aligned}
		f_{i}(x_{i}(t), \tilde{{\bm{u}}}_{i}) &:= \begin{bmatrix}
			\varPsi_{x_{i}}\tilde{x}_{i}(0|t) +\varPsi_{u_{i}}\tilde{u}_{i}(0|t) \\
			\vdots \\
			\varPsi_{x_{i}}\tilde{x}_{i}(N-1|t) +\varPsi_{u_{i}}\tilde{u}_{i}(N-1|t)
		\end{bmatrix}, 
		\\
		b(\varepsilon) &:= 
		\begin{bmatrix}
			(1-\varepsilon M){\bf 1}_{p}^{\top}, \cdots,
			(1-\varepsilon MN){\bf 1}_{p}^{\top}
		\end{bmatrix}^{\top},
	\end{aligned}
}
\end{equation}
where $0\le \varepsilon< \frac{1}{MN}$ is a tolerance parameter. To facilitate the feasibility and stability analysis of DMPC, the terminal constraint set $\mathcal{X}_{i}^{f}$ is selected to satisfy
\begin{equation} \label{equ:terminal}
{\small
\begin{aligned}
	& K_{i}x_{i}\in \mathcal{U}_{i}, \quad (A_{i}+B_{i}K_{i})x_{i}\in \mathcal{X}_{i}^{f}, 
	\\
	& \sum_{i=1}^{M} (\varPsi_{x_{i}}\!+\!\varPsi_{u_{i}}K_{i}) x_{i} \!\le\! (1-\varepsilon MN) {\bf 1}_{p}, \, \forall x_{i}\in \mathcal{X}_{i}^{f}, \forall i \in \mathbb{Z}_{1}^{M}.
\end{aligned}
}
\end{equation}
For further details on the constraint tightening in~\eqref{equ:fb} and the construction of the terminal constraint set $\mathcal{X}_{i}^{f}$ to satisfy~\eqref{equ:terminal}, please refer to~\cite{Wang2017AUTO}.

\vspace{-5pt}
\begin{assumption} \label{assump:slater}
For the initial state $\{ x_{1}(0), \cdots, x_{M}(0) \}$, the Slater condition holds, i.e., there exists $\{\tilde{{\bm{u}}}_{1}, \cdots, \tilde{{\bm{u}}}_{M}\}$ that satisfies \eqref{equ:DMPC_local}.
\end{assumption}
\vspace{-5pt}
The communication network of $M$ subsystems is described by an interaction weight matrix $L=\{L_{ij}\}\in \mathbb{R}^{M\times M}$. Specifically, for each subsystem $i$, the neighbor set $\mathcal{N}_i$ consists of all subsystems $j$ that can directly communicate with subsystem $i$. If $j\in \mathcal{N}_{i}$, then $L_{ij}>0$; otherwise, $L_{ij}=0$. We define $L_{ii}:=-\sum_{j\in\mathcal{N}_i}L_{ij}$  for all $i\in \mathbb{Z}_{1}^{M}$. 
\vspace{-5pt}
\begin{assumption}\label{assump:L}
The matrix  $L$ is symmetric and satisfies
${\bf 1}_{M}^{\top}L={\bf
	0}_{M}^{\top}$, $L{\bf 1}_{M}={\bf
	0}_{M}$, and $ \|I_{M}+L-\frac{{\bf 1}_{M}{\bf 1}_{M}^{\top}}{M}\|<1$.
\end{assumption}
\vspace{-4pt}
Assumption~\ref{assump:L} guarantees that the communication network described by $L$ is connected, meaning that there exists a path from any subsystem to any other subsystem.

\vspace{-5pt}
\subsection{Distributed Dual-Gradient Method}
The Lagrangian function corresponding to the optimization problem in~\eqref{equ:DMPC} is given by $\mathcal{L}(\{\tilde{{\bm{u}}}_{i}\}, \lambda) = \sum_{i=1}^{M} J_{i}(x_{i}(t), \tilde{{\bm{u}}}_{i}) + \lambda^{\top}\left( \sum_{i=1}^{M} f_{i}(x_{i}(t), \tilde{{\bm{u}}}_{i}) - b(\varepsilon) \right)=\sum_{i=1}^{M} \left( J_{i}(x_{i}(t), \tilde{{\bm{u}}}_{i}) + \lambda^{\top}g_{i}(\tilde{{\bm{u}}}_{i}) \right)$, where $\lambda\in \mathbb{R}^{Np}_{+}$ is the Lagrangian multiplier and $g_{i}(\tilde{{\bm{u}}}_{i}):= f_{i}(x_{i}(t), \tilde{{\bm{u}}}_{i}) - \frac{b(\varepsilon)}{M}$. The dual problem of~\eqref{equ:DMPC} is defined as
\begin{equation} \label{equ:dual}
{\small
	\begin{aligned}
		\max_{\lambda\ge 0} \min_{\{ \tilde{{\bm{u}}}_{i}\in \tilde{\mathcal{U}_{i}}(x_{i}(t)) \}} \mathcal{L}(\{\tilde{{\bm{u}}}_{i}\}, \lambda).
	\end{aligned}
}
\end{equation}

Under Assumptions~\ref{assump:system}, \ref{assump:slater}, strong duality holds for~\eqref{equ:DMPC}, and the optimization problem~\eqref{equ:DMPC} can be solved via its dual formulation~\eqref{equ:dual}.   
In addition, the Saddle-Point Theorem holds, i.e., given an optimal primal-dual pair $(\{\tilde{{\bm{u}}}_{i}^{*}\}, \lambda^{*})$, the following relationship holds for any $\lambda\in \mathbb{R}^{Np}_{+}$ and $\tilde{{\bm{u}}}_{i}\in \tilde{\mathcal{U}_{i}}(x_{i}(t))$:
\begin{equation} \label{equ:saddle}
{\small
	\begin{aligned}
		\mathcal{L}(\{\tilde{{\bm{u}}}_{i}^{*}\}, \lambda) \le \mathcal{L}(\{\tilde{{\bm{u}}}_{i}^{*}\}, \lambda^{*}) \le \mathcal{L}(\{\tilde{{\bm{u}}}_{i}\}, \lambda^{*}).
	\end{aligned}
}
\end{equation}

A standard approach for solving~\eqref{equ:dual} is the distributed dual-gradient method~\cite{Falsone2017AUTO,Notarstefano2019FT}. Specifically, the dual variable $\lambda$ is treated as a consensus variable, and each subsystem has a local copy $\lambda_{i}^{k}$. $\Pi_{\mathbb{R}^{Np}_{+}} \left[ \cdot \right]$ denotes Euclidean projection of a vector on the set $\mathbb{R}^{Np}_{+}$, and $\gamma^{k}>0$ is the step-size. Then, the distributed dual-gradient method is summarized in Algorithm~\ref{algo:dual}, and the overall DMPC implementation is detailed in  Algorithm~\ref{algo:DMPC}.

In Algorithm~\ref{algo:dual}, each subsystem avoids sharing the primal variable and only shares its local copy $\lambda_{i}^{k}$ of the dual variable with its neighbors. However, this sharing mechanism cannot provide strong privacy protection, as the iteration trajectory of $\lambda_{i}^{k}$ still bears information of the primal variable. In particular, since the communication network $L$ and the step-size $\gamma^{k}$ are public information (otherwise the algorithm cannot be implemented in a fully decentralized manner), if an adversary can intercept all information exchanged in communication channels, it can record the updates of $\tilde{\lambda}_{i}^{k}$ and $\lambda_{i}^{k}$ at each iteration. Using consecutive updates $\tilde{\lambda}_{i}^{k}$ and $\lambda_{i}^{k+1}$, along with $\gamma^{k}$, the adversary can use~\eqref{equ:DDGA_3} to estimate $g_{i}(\tilde{{\bm{u}}}_{i}^{k+1})$. The value of $g_{i}(\tilde{{\bm{u}}}_{i}^{k+1})$ is privacy-sensitive, as it depends on the primal variable and is used to formulate the coupled global constraint. Therefore, it is necessary to incorporate a privacy protection mechanism into the distributed dual-gradient algorithm to ensure rigorous privacy protection in DMPC.
\setlength{\textfloatsep}{0pt}
\begin{algorithm}[!t]
	\small
	\SetAlFnt{\small}
	\SetKwInOut{Parameter}{Parameter}
	\SetKwInOut{Output}{Output}
	\caption{Distributed Dual-gradient Algorithm}
	\label{algo:dual}
	\SetAlgoLined
	\KwIn{$x_{i}(t)$, $i\in \mathbb{Z}_{1}^{M}$}
	\KwOut{$\tilde{{\bm{u}}}_{i}^{\bar{k}}$, $i\in \mathbb{Z}_{1}^{M}$}
	\textit{Initialization:} set $\lambda_{i}^{0}\in \mathbb{R}^{Np}_{+}$ and $\tilde{{\bm{u}}}_{i}^{0}\in \tilde{\mathcal{U}_{i}}(x_{i}(t))$, $\forall i\in \mathbb{Z}_{1}^{M}$
	\textit{Parameters:} deterministic sequence $\gamma^{k}>0$
	
	\For{$k=0, 1, \cdots, \bar{k}-1$}{
		\For{all $i\in \mathbb{Z}_{1}^{M}$ (in parallel)}{
			Every subsystem $i$ sends $\lambda_{i}^{k}$ to subsystem $j\in \mathcal{N}_{i}$;\\
			After receiving $\lambda_{j}^{k}$ from all $j\in \mathcal{N}_{i}$, subsystem $i$ updates its primal and dual variables:
			\begin{align}
				&\tilde{\lambda}_{i}^{k} = \lambda_{i}^{k}+\sum_{j\in \mathcal{N}_{i}} L_{ij}(\lambda_{j}^{k}-\lambda_{i}^{k}); \label{equ:DDGA_1}
				\\
				&\tilde{{\bm{u}}}_{i}^{k\!+\!1} \!=\! \argmin_{\tilde{{\bm{u}}}_{i}\in \tilde{\mathcal{U}_{i}}(x_{i}(t))}\!\! J_{i}(x_{i}(t), \tilde{{\bm{u}}}_{i}) \!+\! (\tilde{\lambda}_{i}^{k})^{\top}\!g_{i}(\tilde{{\bm{u}}}_{i}); \label{equ:DDGA_2}
				\\
				&\lambda_{i}^{k+1} \!= \Pi_{\mathbb{R}^{Np}_{+}} \left[ \tilde{\lambda}_{i}^{k} + \gamma^{k}g_{i}(\tilde{{\bm{u}}}_{i}^{k+1}) \right]; \label{equ:DDGA_3}
			\end{align}
		} 
	}			
\end{algorithm}

\setlength{\textfloatsep}{0pt}
\setlength{\floatsep}{0pt}
\begin{algorithm}
	\small
	\SetAlFnt{\small}
	\SetKwInOut{Parameter}{Parameter}
	\SetKwInOut{Output}{Output}
	\caption{DMPC Algorithm}
	\label{algo:DMPC}
	\SetAlgoLined
	At time instant $t$, every subsystem $i$ measures its state $x_{i}(t)$;
	
	Every subsystem $i$ computes $\tilde{{\bm{u}}}_{i}^{\bar{k}}$ by following Algorithm~\ref{algo:dual};
	
	Set the input sequence as $\tilde{{\bm{u}}}_{i}(t)=\tilde{{\bm{u}}}_{i}^{\bar{k}}$;
	
	Apply $\tilde{u}_{i}(0|t)$ to subsystem $i$; 
	
	Wait for the next time instant; let $t=t+1$ and go to step 1.			
\end{algorithm}

\vspace{-5pt}
\subsection{On Differential Privacy}
In this work, DP is used to characterize and quantify the achieved privacy level of distributed optimization algorithms. Drawing inspiration from the distributed optimization framework proposed by \cite{Huang2015}, we represent the DMPC problem in~\eqref{equ:DMPC} by four parameters $(L, \mathcal{J}, \tilde{\mathcal{U}}, \mathcal{G})$ to facilitate DP analysis. Specifically, $L$ is the interaction weight matrix describing the communication network, $\mathcal{J}:=\{ J_1,\, \cdots, J_M\}$ denotes the set of objective functions for individual subsystems, $\tilde{\mathcal{U}}:=\{\tilde{\mathcal{U}_{1}}, \cdots, \tilde{\mathcal{U}_{M}} \}$ is the domain of optimization variables, and $\mathcal{G}:=\{g_1, \cdots, g_{M}\}$ represents the set of constraint functions for individual subsystems. The adjacency between two optimization problems is defined as follows: 

\vspace{-5pt}
\begin{definition}\label{de:adjacency}
Two distributed optimization problems $\mathcal{P}=(L, \mathcal{J}, \tilde{\mathcal{U}}, \mathcal{G})$ and $\mathcal{P}^{'}=(L^{'}, \mathcal{J}^{'}, \tilde{\mathcal{U}}^{'}, \mathcal{G}^{'})$ are adjacent if they satisfy the following conditions: $\left. 1 \right)$ $L=L^{'}$, $\mathcal{J}=\mathcal{J}^{'}$, and $\tilde{\mathcal{U}}=\tilde{\mathcal{U}}^{'}$; $\left. 2 \right)$ There exists an $i\in \mathbb{Z}_{1}^{M}$ such that $ g_{i}\neq g_{i}^{'}$, and $g_{j}= g_{j}^{'}$ for all $j\in \mathbb{Z}_{1}^{M},\,j\neq i$; $\left. 3 \right)$ $g_{i}$ and $g_{i}^{'}$, while different, exhibit similar behaviors near $\theta^{*}$, where $\theta^{*}$ is the solution of $\mathcal{P}$.
More precisely, there exists a $\delta>0$ such that for all $\bm{u}_{i}$ and $\bm{u}_{i}^{'}$ within the domain $B_{\delta}(\theta^{*}):=\{\bm{v}: \bm{v}\in \mathbb{R}^{Nm_{i}}, \| \bm{v}-\theta^{*} \|<\delta \}$, $g_{i}(\bm{u}_{i})=g_{i}^{'}(\bm{u}_{i}^{'})$ holds.
\end{definition}

We denote the execution of a distributed optimization algorithm as $\mathcal{A}$, represented by a sequence of the iteration variable $\vartheta$, i.e., $\mathcal{A}=\{\vartheta^0,\vartheta^1,\cdots\}$. Assuming adversaries have access to all communicated messages among subsystems, their observation under an execution $\mathcal{A}$ is the sequence of these messages, denoted by $\mathcal{O}$.
Let $\mathbb{O}$ represent the set of all possible observation sequences. For a distributed optimization problem $\mathcal{P}$ with an initial state $\vartheta^0$, the observation mapping is defined as $\mathcal{R}_{\mathcal{P},\vartheta^0}(\mathcal{A}):= \mathcal{O}$. Moreover, given $\mathcal{P}$, $\vartheta^0$, and an observation sequence $\mathcal{O}$,  $\mathcal{R}_{\mathcal{P},\vartheta^0}^{-1}(\mathcal{O})$ denotes the set of executions $\mathcal{A}$ that could generate the observation $\mathcal{O}$.
\vspace{-5pt}
\begin{definition}[$\epsilon$-differential privacy, \cite{Huang2015}] \label{de:differential_privacy}
For a given $\epsilon>0$, an iterative distributed algorithm ensures $\epsilon$-differential privacy if for any two adjacent optimization problems $\mathcal{P}$ and $\mathcal{P}'$, any initial state ${\vartheta}^0$, and any set of observation sequences $\mathcal{O}_s\subseteq\mathbb{O}$, the following relationship always holds:
\begin{equation}
	{\small
		\begin{aligned}
			\mathbb{P}[\mathcal{R}_{\mathcal{P},\vartheta^0}\left(\mathcal{O}_s\right)]\leq e^\epsilon\mathbb{P}[\mathcal{R}_{\mathcal{P}^{'},\vartheta^0}\left( \mathcal{O}_s\right)],
		\end{aligned}
	}
\end{equation}
with the probability $\mathbb{P}$ taken over the randomness of iteration processes.
\end{definition}

The definition of $\epsilon$-DP guarantees that adversaries, with access to all communicated information, cannot infer knowledge about any participating subsystem's sensitive information.

\section{Differentially-Private Distributed Dual-Gradient Algorithm} \label{sec:DP}
\vspace{-4pt}
\subsection{Algorithm Description}
In this section, a DP noise injection mechanism is proposed to achieve privacy preservation in the distributed dual-gradient algorithm. The method is summarized in Algorithm~\ref{algo:dual_privacy}. 

In contrast to Algorithm~\ref{algo:dual}, where each subsystem directly sends $\lambda_{i}^{k}$ to its neighbors, Algorithm~\ref{algo:dual_privacy} adds DP noise $\zeta_{i}^{k}$ to $\lambda_{i}^{k}$ and shares the perturbed signal $\hat{\lambda}_{i}^{k}\!:=\!\lambda_{i}^{k}\!+\!\zeta_{i}^{k}$ among the communication network. Thus, the adversary's available information is the sequence $\{\hat{\lambda}_{i}^{k}\}$. The randomness introduced by the DP noise ensures that extracting meaningful information from $\{\hat{\lambda}_{i}^{k}\}$ is statistically impossible. Moreover, it should be noted that directly integrating persistent DP noise into optimization algorithms will compromise the convergence accuracy. To address this issue, we utilize findings from~\cite{WangTAC2024,WangTAC2023} to design diminishing weakening factor sequence $\{\chi^k\}$ and step-size sequence $\{\gamma^k\}$. As shown in~\eqref{equ:DP_DDGA_2}, $\chi^{k}$ and $\gamma^k$ are applied on the terms ($L_{ij}(\hat{\lambda}_{j}^{k}\!-\!\lambda_{i}^{k})$) and $g_{i}(\tilde{{\bm{u}}}_{i}^{k+1})$, respectively. The principle behind incorporating the diminishing weakening factor and step-size sequences is to gradually eliminate the impact of DP noise, thereby ensuring convergence accuracy.

\vspace{-5pt}
\begin{algorithm}[!h]
	\small
	\SetAlFnt{\small}
	\SetKwInOut{Parameter}{Parameter}
	\SetKwInOut{Output}{Output}
	\caption{Differentially-private Distributed Dual-gradient Algorithm}
	\label{algo:dual_privacy}
	\SetAlgoLined
	\KwIn{$x_{i}(t)$, $i\in \mathbb{Z}_{1}^{M}$}
	\KwOut{$\tilde{{\bm{u}}}_{i}^{\bar{k}}$, $i\in \mathbb{Z}_{1}^{M}$}
	\textit{Initialization:} set $\lambda_{i}^{0}\in \mathbb{R}^{Np}_{+}$ and $\tilde{{\bm{u}}}_{i}^{0}\in \tilde{\mathcal{U}_{i}}(x_{i}(t))$, $\forall i\in \mathbb{Z}_{1}^{M}$
	\textit{Parameters:} deterministic sequence $\gamma^{k}>0$ and $\chi^{k}>0$
	
	\For{$k=0, 1, \cdots, \bar{k}-1$}{
		\For{all $i\in \mathbb{Z}_{1}^{M}$ (in parallel)}{
			Every subsystem $i$ adds DP noise $\zeta_{i}^{k}$ to $\lambda_{i}^{k}$, and then sends the obscured value $\hat{\lambda}_{i}^{k}:=\lambda_{i}^{k}+\zeta_{i}^{k}$ to subsystem $j\in \mathcal{N}_{i}$;\\
			After receiving $\hat{\lambda}_{j}^{k}$ from all $j\in \mathcal{N}_{i}$, subsystem $i$ updates its primal and dual variables:
			\begin{align}
				&\tilde{{\bm{u}}}_{i}^{k+1} = \argmin_{\tilde{{\bm{u}}}_{i}\in \tilde{\mathcal{U}_{i}}(x_{i}(t))} J_{i}(x_{i}(t), \tilde{{\bm{u}}}_{i}) + (\lambda_{i}^{k})^{\top}g_{i}(\tilde{{\bm{u}}}_{i}); \label{equ:DP_DDGA_1}
				\\
				&\begin{aligned}
					\lambda_{i}^{k+1} = \Pi_{\mathbb{R}^{Np}_{+}} [ \lambda_{i}^{k}&+\chi^{k} \sum_{j\in \mathcal{N}_{i}} L_{ij}(\hat{\lambda}_{j}^{k}-\lambda_{i}^{k})
					\\
					&+ \gamma^{k}g_{i}(\tilde{{\bm{u}}}_{i}^{k+1}) ]; \label{equ:DP_DDGA_2}
				\end{aligned}
			\end{align}
		} 
	}			
\end{algorithm}
\vspace{-5pt}
To facilitate the convergence and privacy analysis, the following DP noise assumption is introduced:
\vspace{-5pt}
\begin{assumption}\label{assump:dp-noise}
For every $k$ and every $i\in \mathbb{Z}_{1}^{M}$, conditional on $\lambda_{i}^{k}$,
the DP noise $\zeta_i^k$ satisfies $
\mathbb{E}\left[\zeta_i^k\mid \lambda_i^k\right]=0$ and $\mathbb{E}\left[\|\zeta_i^k\|^2\mid \lambda_{i}^{k}\right]=(\sigma_{i}^k)^2$ for all  $k\ge0$, and
\begin{equation}\label{eq:condition_assumption1}
	{\small
		\begin{aligned}
			\sum_{k=0}^\infty (\chi^k)^2\, \max_{i\in \mathbb{Z}_{1}^{M}}(\sigma_{i}^k)^2 <\infty,
		\end{aligned}
	}
\end{equation} 
where $\{\chi^k\}$ ($\chi^k > 0$) is the weakening factor sequence from Algorithm~\ref{algo:dual_privacy}.
\end{assumption}
\vspace{-5pt}
Considering Assumption~\ref{assump:dp-noise}, we use the Laplace noise mechanism to generate $\zeta_{i}^{k}$ and then add it to all shared messages. More specifically, given a constant $\nu>0$, let ${\rm Lap}(\nu)$ represent a Laplace distribution of a scalar random variable, and $\rho \rightarrow \frac{1}{2\nu}e^{-\frac{|\rho|}{\nu}}$ be the corresponding probability density function. At each iteration $k$, every element of $\zeta_{i}^{k}$ is independently sampled from Laplace distribution ${\rm Lap}(\nu^{k})$, where $\nu^{k}>0$. One can verify that the mean and variance of ${\rm Lap}(\nu^{k})$ is zero and $2(\nu^{k})^{2}$, respectively. Therefore, $\zeta_{i}^{k}$ satisfies $
\mathbb{E}\left[\zeta_i^k\mid \lambda_i^k\right]=0$ and $\mathbb{E}\left[\|\zeta_i^k\|^2\mid \lambda_{i}^{k}\right]=(\sigma_{i}^k)^2=2(\nu^{k})^{2}$.
\begin{remark} \label{remark:DP-noise}
In Algorithm~\ref{algo:dual_privacy}, the variance of DP noise $\zeta_{i}^{k}$, i.e., $2(\nu^{k})^{2}$, can be constant or increasing with $k$. To satisfy condition~\eqref{eq:condition_assumption1}, one can carefully design the weakening factor sequence $\{\chi^{k}\}$ to make its decreasing rate outweigh the increasing rate of the noise level sequence $\{\nu^{k}\}$. For instant, \eqref{eq:condition_assumption1} is satisfied with $\chi^{k} = \frac{c_1}{1+c_{2}k^{c_3}}$ and $\nu^{k}=d_{1}+d_{2}k^{d_{3}}$, where $c_{1}>0$, $c_{2}>0$, $0.5<c_{3}<1$, $d_{1}>0$, $d_{2}>0$, and $0<d_{3}\le1-c_{3}$.
For notation simplicity, we assume that all subsystems use the same Laplace distribution ${\rm Lap}(\nu^{k})$ to generate DP noise. Actually, each subsystem can independently select its DP noise intensity as long as condition~\eqref{eq:condition_assumption1} is met.
\end{remark}
\vspace{-15pt}
\subsection{Convergence Analysis}
The arithmetic average of local dual variables $\lambda_{i}^{k}$ is given~by
\begin{equation}
{\small
	\begin{aligned}
		\bar{\lambda}^{k} = \frac{1}{M}\sum_{i=1}^{M}\lambda_{i}^{k}.
	\end{aligned}
}
\end{equation}
The relation between $\lambda_{i}^{k}$ and $\bar{\lambda}^{k}$ is summarized in the following theorem.
\vspace{-5pt}
\begin{theorem}\label{Th:consensus_tracking}
Suppose Assumptions~\ref{assump:system}, \ref{assump:L}, and \ref{assump:dp-noise} hold. If
the non-negative weakening factor sequence $\{\chi^k\}$ and the step-size sequence $\{\gamma^k\}$ in Algorithm~\ref{algo:dual_privacy} satisfy $\sum_{k=0}^{\infty}\chi^k=\infty$, $\sum_{k=0}^{\infty}(\chi^k)^2<\infty$, and $\sum_{k=0}^{\infty}\frac{(\gamma^k)^2}{\chi^k}<\infty$, then the following results hold almost surely: $\left. 1 \right)$ $\lim_{k\to\infty}\|\lambda_i^k - \bar{\lambda}^{k}\|=0$ for all $i\in \mathbb{Z}_{1}^{M}$; $\left.2 \right)$ $\sum_{k=0}^{\infty}\chi^k\sum_{i=1}^{M}\|\lambda_i^k-\bar{\lambda}^k\|^2<\infty$; $\left.3 \right)$ $\sum_{k=0}^{\infty}\gamma^k \sum_{i=1}^{M}\|\lambda_i^k-\bar{\lambda}^{k}\| <\infty$.
\end{theorem}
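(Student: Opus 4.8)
The plan is to control the aggregate consensus error $V^k:=\sum_{i=1}^{M}\|\lambda_i^k-\bar\lambda^k\|^2$ by deriving a stochastic contraction recursion of the form $\mathbb E[V^{k+1}\mid\mathcal F^k]\le(1-\chi^k\beta)V^k+\xi_k$, where $\mathcal F^k$ is the $\sigma$-algebra generated by the iterates through step $k$, $\beta>0$ is a spectral-gap constant from Assumption~\ref{assump:L}, and $\{\xi_k\}$ is summable, and then to invoke the Robbins--Siegmund supermartingale convergence theorem. Writing the deviations $\delta_i^k:=\lambda_i^k-\bar\lambda^k$, note that $\sum_{i=1}^M\delta_i^k=0$, so the stacked deviation $\Delta^k$ (with $\|\Delta^k\|^2=V^k$) lies in the subspace orthogonal to $\mathbf 1_M$. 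The first difficulty is the orthant projection $\Pi_{\mathbb R^{Np}_+}[\cdot]$ in~\eqref{equ:DP_DDGA_2}, which prevents a direct linear recursion. I would remove it as follows: since $\bar\lambda^k$ is an average of vectors in $\mathbb R^{Np}_+$ it lies in $\mathbb R^{Np}_+$, and since $\bar\lambda^{k+1}$ minimizes $c\mapsto\sum_i\|\lambda_i^{k+1}-c\|^2$, we have $V^{k+1}\le\sum_i\|\lambda_i^{k+1}-\bar\lambda^k\|^2$; nonexpansiveness of $\Pi_{\mathbb R^{Np}_+}[\cdot]$ together with $\Pi_{\mathbb R^{Np}_+}[\bar\lambda^k]=\bar\lambda^k$ then yields $V^{k+1}\le\sum_i\|\delta_i^k+\chi^k((L\Delta^k)_i+w_i^k)+\gamma^k g_i(\tilde{\bm u}_i^{k+1})\|^2$, where $(L\Delta^k)_i$ denotes the blockwise Laplacian action on $\Delta^k$ (using $L_{ii}=-\sum_{j\in\mathcal N_i}L_{ij}$) and $w_i^k:=\sum_{j\in\mathcal N_i}L_{ij}\zeta_j^k$ is the aggregated injected noise.

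Next I would extract the contraction. Because $\Delta^k$ is orthogonal to $\mathbf 1_M$ and $\mathbf 1_M^\top L=\mathbf 0_M^\top$, the quantity $B^k:=(I+\chi^k L)\Delta^k$ stays in that subspace; decomposing $I+\chi^k L=(1-\chi^k)I+\chi^k(I+L)$ and applying Assumption~\ref{assump:L} gives $\|B^k\|\le(1-\chi^k\beta)\|\Delta^k\|$ with $\beta:=1-\|I_M+L-\tfrac{1}{M}\mathbf 1_M\mathbf 1_M^\top\|\in(0,1]$, valid once $\chi^k\le1$ (which holds for large $k$ since $\chi^k\to0$). Expanding the squared norm and taking $\mathbb E[\cdot\mid\mathcal F^k]$ is where the ordering in Algorithm~\ref{algo:dual_privacy} is used: $\tilde{\bm u}_i^{k+1}$ is computed from $\lambda_i^k$ in~\eqref{equ:DP_DDGA_1}, so $B^k$ and each $g_i(\tilde{\bm u}_i^{k+1})$ are $\mathcal F^k$-measurable, whereas $w_i^k$ has conditional mean zero by Assumption~\ref{assump:dp-noise}, annihilating the noise cross term. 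What remains is $\|B^k\|^2+2\gamma^k\sum_i\langle (B^k)_i,g_i(\tilde{\bm u}_i^{k+1})\rangle+(\gamma^k)^2\sum_i\|g_i(\tilde{\bm u}_i^{k+1})\|^2+(\chi^k)^2\,\mathbb E[\sum_i\|w_i^k\|^2\mid\mathcal F^k]$. A Young inequality $2\gamma^k\sum_i\langle (B^k)_i,g_i(\tilde{\bm u}_i^{k+1})\rangle\le\chi^k\beta\|B^k\|^2+\tfrac{(\gamma^k)^2}{\chi^k\beta}\sum_i\|g_i(\tilde{\bm u}_i^{k+1})\|^2$ absorbs the cross term using the elementary bound $(1+\chi^k\beta)(1-\chi^k\beta)^2\le1-\chi^k\beta$; compactness of the feasible sets in Assumption~\ref{assump:system} bounds $\sum_i\|g_i(\tilde{\bm u}_i^{k+1})\|^2\le MC_g^2$, and independence plus the variance bound of the Laplace noise give $\mathbb E[\sum_i\|w_i^k\|^2\mid\mathcal F^k]\le c_L\max_i(\sigma_i^k)^2$ for constants $C_g,c_L$ depending on the data and on $L$. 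Collecting everything yields the recursion with $\xi_k=O\!\big(\tfrac{(\gamma^k)^2}{\chi^k}+(\gamma^k)^2+(\chi^k)^2\max_i(\sigma_i^k)^2\big)$.

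To close, observe that the hypotheses force $\sum_k\xi_k<\infty$: the $\gamma^k$ terms are summable because $\sum_k(\gamma^k)^2/\chi^k<\infty$ and $(\gamma^k)^2\le(\gamma^k)^2/\chi^k$ once $\chi^k\le1$, while $\sum_k(\chi^k)^2\max_i(\sigma_i^k)^2<\infty$ is exactly~\eqref{eq:condition_assumption1}. Robbins--Siegmund then gives, almost surely, that $V^k$ converges and that $\sum_k\chi^k\beta V^k<\infty$; the latter is claim~2) after dividing by $\beta$. Because $\sum_k\chi^k=\infty$, a convergent $V^k$ with $\sum_k\chi^k V^k<\infty$ must converge to $0$, which is claim~1). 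For claim~3), Cauchy--Schwarz gives $\sum_i\|\delta_i^k\|\le\sqrt M\sqrt{V^k}$, and the split $\gamma^k\sqrt{V^k}\le\tfrac12\big(\tfrac{(\gamma^k)^2}{\chi^k}+\chi^k V^k\big)$ makes $\sum_k\gamma^k\sum_i\|\delta_i^k\|$ summable by combining claim~2) with $\sum_k(\gamma^k)^2/\chi^k<\infty$; this is precisely where the third hypothesis on $\{\gamma^k\}$ is consumed.

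I expect the main obstacle to be the orthant projection in~\eqref{equ:DP_DDGA_2}, which breaks the affine structure the contraction argument relies on. The centering-at-$\bar\lambda^k$ device above is what resolves it: a direct bound on the projection error would leave a term of order $\chi^k\|\Delta^k\|$ that competes with, and could overwhelm, the contraction gain $\chi^k\beta$, whereas routing everything through the nonexpansiveness of $\Pi_{\mathbb R^{Np}_+}[\cdot]$ about the feasible center $\bar\lambda^k$ eliminates the projection without generating such a term. A secondary point worth verifying is that the fresh Laplace noise is conditionally independent of $\mathcal F^k$ given $\lambda_i^k$, so that the mean-zero property in Assumption~\ref{assump:dp-noise} indeed kills the cross term under $\mathbb E[\cdot\mid\mathcal F^k]$.
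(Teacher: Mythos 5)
Your argument is correct, but it is not the route the paper takes: the paper does not prove Theorem~\ref{Th:consensus_tracking} internally at all. Its entire proof consists of establishing that $g_{i}(\tilde{{\bm{u}}}_{i})$ is uniformly bounded on the compact set $\tilde{\mathcal{U}_{i}}(x_{i}(t))$ (via Assumption~\ref{assump:system}) and then deferring everything else to ``the same line of reasoning as Theorem 1 in \cite{WangTAC2024}.'' You instead supply a complete, self-contained derivation, and the key steps all check out: centering at $\bar{\lambda}^{k}$ via the variance-minimization property of the average, combined with nonexpansiveness of $\Pi_{\mathbb{R}^{Np}_{+}}[\cdot]$ and $\Pi_{\mathbb{R}^{Np}_{+}}[\bar{\lambda}^{k}]=\bar{\lambda}^{k}$, correctly eliminates the projection without the lossy $O(\chi^{k}\|\Delta^{k}\|)$ error term you rightly identify as the trap; the split $I+\chi^{k}L=(1-\chi^{k})I+\chi^{k}(I+L)$ together with Assumption~\ref{assump:L} gives the contraction $\|(I+\chi^{k}L)\Delta^{k}\|\le(1-\chi^{k}\beta)\|\Delta^{k}\|$ once $\chi^{k}\le 1$; the Young inequality with parameter $\chi^{k}\beta$ and the elementary bound $(1+s)(1-s)^{2}\le 1-s$ correctly absorb the cross term at the price of $(\gamma^{k})^{2}/\chi^{k}$, which is summable by hypothesis; the noise term is summable exactly by~\eqref{eq:condition_assumption1}; and your supermartingale step is precisely the paper's own Lemma~\ref{Lemma-polyak1} (which the paper reserves for Theorem~\ref{Th:convergence}), yielding claims 2), then 1) from $\sum_{k}\chi^{k}=\infty$, then 3) by Cauchy--Schwarz and the AM--GM split $\gamma^{k}\sqrt{V^{k}}\le\frac{1}{2}\bigl(\frac{(\gamma^{k})^{2}}{\chi^{k}}+\chi^{k}V^{k}\bigr)$. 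Your measurability bookkeeping is also right: by the ordering of \eqref{equ:DP_DDGA_1}--\eqref{equ:DP_DDGA_2}, $g_{i}(\tilde{{\bm{u}}}_{i}^{k+1})$ is a deterministic function of $\lambda_{i}^{k}$, hence $\mathcal{F}^{k}$-measurable. Two minor points if this were to be incorporated: the mean-zero property of the injected noise must hold conditionally on the full history $\mathcal{F}^{k}$, not merely on $\lambda_{i}^{k}$ (guaranteed when the Laplace noise is drawn freshly and independently of the past, which is the intended mechanism), and the contraction recursion is only claimed from the index where $\chi^{k}\le 1$ onward, which suffices since all three claims are tail properties and the finitely many earlier terms are finite. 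In terms of trade-offs, the paper's citation-based proof buys brevity at the cost of opacity, while your derivation makes explicit where each hypothesis ($\sum_{k}\chi^{k}=\infty$, $\sum_{k}(\chi^{k})^{2}<\infty$, $\sum_{k}(\gamma^{k})^{2}/\chi^{k}<\infty$, and~\eqref{eq:condition_assumption1}) is consumed and is verifiable entirely within the paper's own toolkit.
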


\begin{proof}
As stated in Assumption~\ref{assump:system}, $\mathcal{X}_{i}$ and $\mathcal{U}_{i}$ are bounded, and then it can be concluded from~\eqref{equ:U} that the local constraint set $\tilde{\mathcal{U}_{i}}(x_{i}(t))$ is bounded. From~\eqref{equ:fb} and the relation $g_{i}(\tilde{{\bm{u}}}_{i}):= f_{i}(x_{i}(t), \tilde{{\bm{u}}}_{i}) - \frac{b(\varepsilon)}{M}$, we have that for any $\tilde{{\bm{u}}}_{i}\in \tilde{\mathcal{U}_{i}}(x_{i}(t))$, $g_{i}(\tilde{{\bm{u}}}_{i})$ is bounded, i.e., there exists a constant $C_{g}\in \mathbb{R}_{+}$ such that $\|g_{i}(\tilde{{\bm{u}}}_{i}) \|\le C_{g}, \forall \tilde{{\bm{u}}}_{i}\in \tilde{\mathcal{U}_{i}}(x_{i}(t)), i\in \mathbb{Z}_{1}^{M}$. According to Assumptions~\ref{assump:L}, \ref{assump:dp-noise}, and the conditions that $\sum_{k=0}^{\infty}\chi^k=\infty$, $\sum_{k=0}^{\infty}(\chi^k)^2<\infty$, $\sum_{k=0}^{\infty}\frac{(\gamma^k)^2}{\chi^k}<\infty$, and $\|g_{i}(\tilde{{\bm{u}}}_{i}) \|\le C_{g}$, we can follow the same line of reasoning as that of Theorem 1 in \cite{WangTAC2024} to obtain the results.
\end{proof}
\vspace{-5pt}
The following lemma is required for convergence analysis:
\vspace{-15pt}
\begin{lemma}[Lemma 11, \cite{Polyak1987}]\label{Lemma-polyak1}
Let $\{\psi^k\}$, $\{\phi^k\}$, $\{a^k\}$, and $\{\varpi^k\}$ be random non-negative scalar sequences such that
\[
\begin{aligned}
	\mathbb{E}\left[\psi^{k+1}|\mathcal{F}^k \right]\le(1+{a^k}) \psi^k -\phi^k+{ \varpi^k},\quad \forall k\geq 0,
\end{aligned}
\]
where $\mathcal{F}^k=\{\psi^\ell,\phi^\ell,{a^\ell},{\varpi^\ell};\, 0\le \ell\le k\}$. If $\sum_{k=0}^\infty {a^k}<\infty$ and $\sum_{k=0}^\infty {\varpi^k}<\infty$, then $\sum_{k=0}^{\infty}\phi^k<\infty$ and $\{ \psi^k \}$ converges to a finite variable almost surely.
\end{lemma}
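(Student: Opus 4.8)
The plan is to recognize this as the Robbins--Siegmund almost-supermartingale convergence theorem and to convert the stated ``almost-supermartingale'' recursion into a genuine supermartingale to which the standard convergence theorem applies. Two features block a direct application: the multiplicative factor $(1+a^k)$ in front of $\psi^k$, and the inhomogeneous terms $-\phi^k$ and $+\varpi^k$. I would dispose of them one at a time, first normalizing away the factor and then absorbing the inhomogeneous terms into running partial sums.

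First I would remove the multiplicative factor by a normalization that is measurable with respect to the current filtration. Set $A^k:=\prod_{\ell=0}^{k-1}(1+a^\ell)$ with $A^0:=1$; since each $a^\ell\ge0$ is $\mathcal{F}^\ell$-measurable, $A^k$ is $\mathcal{F}^{k-1}$-measurable, and $\log(1+a^\ell)\le a^\ell$ together with $\sum_\ell a^\ell<\infty$ shows that $A^k$ increases to a finite positive limit $A^\infty$, so $1\le A^k\le A^\infty<\infty$ almost surely. Dividing the hypothesis by the $\mathcal{F}^k$-measurable quantity $A^{k+1}=(1+a^k)A^k$ and writing $\tilde{\psi}^k:=\psi^k/A^k$, $\tilde{\phi}^k:=\phi^k/A^{k+1}$, $\tilde{\varpi}^k:=\varpi^k/A^{k+1}$ yields the homogeneous recursion
\[ \mathbb{E}\big[\tilde{\psi}^{k+1}\mid\mathcal{F}^k\big]\le \tilde{\psi}^k-\tilde{\phi}^k+\tilde{\varpi}^k, \]
in which the factor has vanished. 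Because $A^k\ge1$ we have $\sum_k\tilde{\varpi}^k\le\sum_k\varpi^k<\infty$ almost surely, and $\sum_k\phi^k\le A^\infty\sum_k\tilde{\phi}^k$, so it suffices to establish both conclusions for the tilded sequences.

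Next I would compensate the inhomogeneous terms by absorbing their partial sums. Define $U^k:=\tilde{\psi}^k+\sum_{\ell=0}^{k-1}\tilde{\phi}^\ell-\sum_{\ell=0}^{k-1}\tilde{\varpi}^\ell$. A one-line computation from the homogeneous recursion (using that $\tilde{\phi}^k,\tilde{\varpi}^k\in\mathcal{F}^k$) gives $\mathbb{E}[U^{k+1}\mid\mathcal{F}^k]\le U^k$, so $U^k$ is a supermartingale; moreover, since $\tilde{\psi}^k\ge0$ and the $\tilde{\phi}$-sum is nonnegative, $U^k\ge-\sum_{\ell=0}^{\infty}\tilde{\varpi}^\ell$, a finite random variable. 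Once $U^k$ is shown to converge almost surely to a finite limit, the conclusions follow by bookkeeping: the identity $U^k+\sum_{\ell<k}\tilde{\varpi}^\ell=\tilde{\psi}^k+\sum_{\ell<k}\tilde{\phi}^\ell$ together with the convergence of $\sum_\ell\tilde{\varpi}^\ell$ shows that $\tilde{\psi}^k+\sum_{\ell<k}\tilde{\phi}^\ell$ converges; since both summands are nonnegative and the $\tilde{\phi}$-partial-sums are monotone, this forces $\sum_\ell\tilde{\phi}^\ell<\infty$ and hence the convergence of $\tilde{\psi}^k$. Un-normalizing through $A^k\to A^\infty$ then delivers $\sum_k\phi^k<\infty$ and the almost-sure convergence of $\psi^k$.

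I expect the main obstacle to be the application of the supermartingale convergence theorem to $U^k$, which is bounded below only by the \emph{random} (almost-surely finite) variable $-\sum_\ell\tilde{\varpi}^\ell$ rather than by a deterministic constant, so the textbook nonnegative-supermartingale statement does not apply verbatim. I would resolve this by localization: for each constant $c>0$ introduce the stopping time $\tau_c:=\inf\{k:\sum_{\ell=0}^{k}\tilde{\varpi}^\ell>c\}$ and consider the stopped process $U^{k\wedge\tau_c}$, which satisfies $U^{k\wedge\tau_c}\ge-c$ and is therefore a bounded-below supermartingale that converges almost surely; on the event $\{\tau_c=\infty\}$ this is exactly the convergence of $U^k$. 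Since $\sum_\ell\tilde{\varpi}^\ell<\infty$ almost surely, the events $\{\tau_c=\infty\}=\{\sum_\ell\tilde{\varpi}^\ell\le c\}$ exhaust the probability space as $c\to\infty$, which upgrades convergence of $U^k$ to hold almost everywhere. Everything else---the normalization identities and the monotone-sequence bookkeeping---is routine.
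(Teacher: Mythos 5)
The paper never proves this lemma: it is imported wholesale, statement and proof, as Lemma~11 of Polyak's book---i.e., it is the classical Robbins--Siegmund almost-supermartingale convergence theorem---so there is no in-paper argument to compare yours against. Your proposal is, in essence, the classical proof of that theorem, and its structure is sound. The normalization step is valid: $A^{k+1}=\prod_{\ell=0}^{k}(1+a^\ell)$ is $\mathcal{F}^k$-measurable and positive, so it can be pulled through the conditional expectation, and $\sum_k a^k<\infty$ gives $1\le A^k\uparrow A^\infty<\infty$ almost surely, which also justifies the final un-normalization $\psi^k=A^k\tilde{\psi}^k$ and the bound $\sum_k\phi^k\le A^\infty\sum_k\tilde{\phi}^k$. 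The compensation step is valid because the partial sums you add to form $U^k$ are $\mathcal{F}^k$-measurable. The localization is also correctly set up: $\{\tau_c\le k\}$ depends only on $\tilde{\varpi}^0,\dots,\tilde{\varpi}^k\in\mathcal{F}^k$, the stopped process satisfies $U^{k\wedge\tau_c}\ge-c$, and $\{\tau_c=\infty\}=\{\sum_{\ell}\tilde{\varpi}^\ell\le c\}$ exhausts the space as $c\to\infty$, so almost-sure convergence of $U^k$ follows. The concluding bookkeeping (monotone $\tilde{\phi}$-partial sums dominated by a convergent sequence, hence summable, hence $\tilde{\psi}^k$ converges) is correct.

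The one step you should tighten is the appeal to Doob's convergence theorem for $U^{k\wedge\tau_c}$. Doob's theorem requires integrability, $\mathbb{E}\left[|U^{k\wedge\tau_c}|\right]<\infty$, and the lemma assumes none: $\psi^0$ (hence every $U^k$) may fail to be integrable even though it is almost surely finite, and all conditional expectations in the hypothesis are then only of the generalized kind defined for non-negative variables. The standard repair is one more normalization of the same flavor as your first one: set $V^k:=(U^{k\wedge\tau_c}+c)/(1+\psi^0)$. Since $1+\psi^0$ is $\mathcal{F}^0$-measurable, positive, and finite, $V^k$ is a non-negative process satisfying $\mathbb{E}\left[V^{k+1}\mid\mathcal{F}^k\right]\le V^k$ in the generalized sense, and iterating the tower property gives $\mathbb{E}\left[V^k\mid\mathcal{F}^0\right]\le(\psi^0+c)/(1+\psi^0)\le 1+c$, hence $\mathbb{E}\left[V^k\right]\le 1+c<\infty$. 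Doob's theorem thus applies to $V^k$, and its almost-sure convergence transfers back to $U^{k\wedge\tau_c}$ after multiplying by $1+\psi^0$. With that patch, your argument is a complete and correct proof of the lemma, matching the standard literature proof that the paper cites.
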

\vspace{-5pt}
\begin{theorem}\label{Th:convergence}
Suppose Assumptions~\ref{assump:system}, \ref{assump:L}, and \ref{assump:dp-noise} hold. If
the non-negative sequences $\{\chi^k\}$ and $\{\gamma^k\}$ satisfy
$\sum_{k=0}^{\infty}\chi^k=\infty$, $\sum_{k=0}^{\infty}(\chi^k)^2<\infty$, $\sum_{k=0}^{\infty}\gamma^k=\infty$, and $\sum_{k=0}^{\infty}\frac{(\gamma^k)^2}{\chi^k}<\infty$, then Algorithm~\ref{algo:dual_privacy} guarantees that $\lim_{k\rightarrow \infty} \mathcal{L}(\{\tilde{{\bm{u}}}_{i}^{*}\}, \bar{\lambda}^{k}) = \mathcal{L}(\{\tilde{{\bm{u}}}_{i}^{*}\}, \lambda^{*})$ and $\lim_{k\rightarrow \infty} \mathcal{L}(\{\tilde{{\bm{u}}}_{i}^{k}\}, \lambda^{*}) = \mathcal{L}(\{\tilde{{\bm{u}}}_{i}^{*}\}, \lambda^{*})$
hold almost surely.
\end{theorem}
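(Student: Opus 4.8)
The plan is to track the aggregate Lyapunov quantity $V^k:=\sum_{i=1}^{M}\|\lambda_i^k-\lambda^*\|^2$ and to show it obeys an almost-supermartingale recursion to which Lemma~\ref{Lemma-polyak1} applies. First I would use that $\lambda^*\in\mathbb{R}^{Np}_{+}$, so the Euclidean projection in \eqref{equ:DP_DDGA_2} is nonexpansive toward $\lambda^*$, giving $\|\lambda_i^{k+1}-\lambda^*\|^2\le\|\lambda_i^k-\lambda^*+\chi^k\sum_{j\in\mathcal N_i}L_{ij}(\hat\lambda_j^k-\lambda_i^k)+\gamma^k g_i(\tilde{\bm{u}}_i^{k+1})\|^2$. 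Expanding the square and summing over $i$, Assumption~\ref{assump:L} (in particular $\mathbf 1_M^\top L=\mathbf 0_M^\top$) makes the consensus and noise contributions cancel in the network average; the deterministic consensus cross-term is a dissipative (nonpositive) quadratic form in $\lambda^k$ whose magnitude is $O(\chi^k\sum_i\|\lambda_i^k-\bar\lambda^k\|^2)$, summable by part 2) of Theorem~\ref{Th:consensus_tracking}; and the boundedness $\|g_i(\tilde{\bm{u}}_i)\|\le C_g$ (established in the proof of Theorem~\ref{Th:consensus_tracking}) together with part 2) controls the remaining weakening-factor squared terms.

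The heart of the argument is the cross-term $2\gamma^k\sum_i(\lambda_i^k-\lambda^*)^\top g_i(\tilde{\bm{u}}_i^{k+1})$. I would split it as $2\gamma^k\sum_i(\lambda_i^k-\bar\lambda^k)^\top g_i(\tilde{\bm{u}}_i^{k+1})+2\gamma^k(\bar\lambda^k-\lambda^*)^\top\sum_i g_i(\tilde{\bm{u}}_i^{k+1})$; the first piece is bounded by $2\gamma^k C_g\sum_i\|\lambda_i^k-\bar\lambda^k\|$, which is summable by part 3) of Theorem~\ref{Th:consensus_tracking}. For the second piece I would invoke the per-subsystem optimality of $\tilde{\bm{u}}_i^{k+1}$ in \eqref{equ:DP_DDGA_1} tested against the optimal primal $\tilde{\bm{u}}_i^*$, together with complementary slackness $(\lambda^*)^\top\sum_i g_i(\tilde{\bm{u}}_i^*)=0$, primal feasibility $\sum_i g_i(\tilde{\bm{u}}_i^*)\le 0$ (hence $(\bar\lambda^k)^\top\sum_i g_i(\tilde{\bm{u}}_i^*)\le 0$ since $\bar\lambda^k\ge 0$), and both saddle-point inequalities in \eqref{equ:saddle}. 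A short manipulation then shows $2\gamma^k(\bar\lambda^k-\lambda^*)^\top\sum_i g_i(\tilde{\bm{u}}_i^{k+1})\le -2\gamma^k(\Delta_1^k+\Delta_2^{k+1})+(\text{summable})$, where $\Delta_1^k:=\mathcal{L}(\{\tilde{\bm{u}}_i^*\},\lambda^*)-\mathcal{L}(\{\tilde{\bm{u}}_i^*\},\bar\lambda^k)\ge 0$ and $\Delta_2^{k+1}:=\mathcal{L}(\{\tilde{\bm{u}}_i^{k+1}\},\lambda^*)-\mathcal{L}(\{\tilde{\bm{u}}_i^*\},\lambda^*)\ge 0$ are precisely the two gaps whose vanishing is claimed.

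Next I would take the conditional expectation $\mathbb{E}[\,\cdot\mid\mathcal F^k]$ with $\mathcal F^k=\sigma(\lambda_i^\ell:\ell\le k)$. Under Assumption~\ref{assump:dp-noise} the noise is zero-mean, so every term linear in $\zeta_i^k$ drops out; the noise-variance terms are summable because $\sum_k(\chi^k)^2\max_i(\sigma_i^k)^2<\infty$, and the $\sum_k(\gamma^k)^2$ terms are summable since $\chi^k\to 0$ gives $(\gamma^k)^2\le(\gamma^k)^2/\chi^k$ eventually, with $\sum_k(\gamma^k)^2/\chi^k<\infty$. This yields $\mathbb{E}[V^{k+1}\mid\mathcal F^k]\le V^k-2\gamma^k(\Delta_1^k+\Delta_2^{k+1})+\varpi^k$ with $\varpi^k\ge 0$ summable almost surely, so Lemma~\ref{Lemma-polyak1} gives that $V^k$ converges almost surely and $\sum_k\gamma^k(\Delta_1^k+\Delta_2^{k+1})<\infty$ almost surely. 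Because $\sum_k\gamma^k=\infty$, this forces $\liminf_k(\Delta_1^k+\Delta_2^{k+1})=0$, and since both gaps are nonnegative, $\liminf_k\Delta_1^k=\liminf_k\Delta_2^{k+1}=0$.

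The main obstacle is upgrading these $\liminf$ statements to genuine limits, which the persistent DP noise and the per-subsystem projection make delicate (the dual increments need not be $O(\gamma^k)$, so a naive asymptotic-regularity argument is not immediately available). I would exploit that $R_i>0$ makes each $J_i$ strongly convex in $\tilde{\bm{u}}_i$ while $g_i$ is affine, so the minimizer map $\lambda\mapsto\tilde{\bm{u}}_i(\lambda)$ of \eqref{equ:DP_DDGA_1} is single-valued and Lipschitz and the primal optimum $\{\tilde{\bm{u}}_i^*\}$ is unique. Repeating the $V^k$ derivation with an arbitrary element $\lambda$ of the optimal dual set $\Lambda^*$ yields Fejér-type monotonicity (each $\sum_i\|\lambda_i^k-\lambda\|^2$ converges almost surely, extended to all $\lambda\in\Lambda^*$ via a countable dense subset); combined with part 1) of Theorem~\ref{Th:consensus_tracking} this shows $\|\bar\lambda^k-\lambda\|$ converges for every $\lambda\in\Lambda^*$ and that $\bar\lambda^k$ is bounded. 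I would then argue, using $\liminf_k\Delta_2^{k+1}=0$ and continuity of the solution map, that every limit point of $\bar\lambda^k$ lies in $\Lambda^*$, and an Opial-type argument upgrades this to convergence $\bar\lambda^k\to\lambda^\infty\in\Lambda^*$ almost surely. Finally, since $\mathcal{L}(\{\tilde{\bm{u}}_i^*\},\cdot)$ is affine with $(\lambda^\infty)^\top\sum_i g_i(\tilde{\bm{u}}_i^*)=0$, continuity gives $\Delta_1^k\to 0$, while $\lambda_i^k\to\lambda^\infty$ together with continuity of the solution map gives $\tilde{\bm{u}}_i^{k+1}\to\tilde{\bm{u}}_i^*$ and hence $\Delta_2^{k+1}\to 0$, establishing both claimed limits.
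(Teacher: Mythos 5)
Your proposal is correct, and its core is the same as the paper's: project-then-expand the update \eqref{equ:DP_DDGA_2} against a saddle point, use double stochasticity of the averaging weights and the consensus bounds of Theorem~\ref{Th:consensus_tracking} to make the disagreement and noise terms summable, use the per-subsystem optimality in \eqref{equ:DP_DDGA_1} together with the saddle-point inequalities \eqref{equ:saddle} to surface the two Lagrangian gaps, and close with Lemma~\ref{Lemma-polyak1}. Two differences are worth noting. First, a cosmetic one: you keep both gaps $\Delta_1^k$ and $\Delta_2^{k+1}$ in a single recursion, while the paper instantiates its generic inequality \eqref{equ:E} twice at $(\{\tilde{\bm{u}}_i^*\},\lambda^*)$ to get \eqref{equ:E2} and \eqref{equ:E3} and applies Lemma~\ref{Lemma-polyak1} to each; these are equivalent. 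Second, and substantively: the paper ends by asserting that summability of $\gamma^k\Delta_1^k$ and $\gamma^k\Delta_2^{k+1}$ together with $\sum_k\gamma^k=\infty$ yields convergence of the gaps to zero, whereas, as you correctly point out, this by itself only yields $\liminf_k\Delta_1^k=\liminf_k\Delta_2^{k+1}=0$; the paper passes over this $\liminf$-to-$\lim$ step in silence. Your Fej\'{e}r/Opial appendix is exactly the kind of argument needed to make that step airtight, and it is essentially sound: strong convexity of $J_i$ (from $R_i>0$) gives a unique, continuous primal solution map; running the recursion with an arbitrary dual-optimal $\lambda$ gives almost-sure convergence of $\sum_i\|\lambda_i^k-\lambda\|^2$; and a limit point extracted along the $\liminf$ subsequence is dual optimal (using $\Delta_1\to0$ for complementary slackness and $\Delta_2\to0$, via strong convexity, to certify that $\{\tilde{\bm{u}}_i^*\}$ minimizes the Lagrangian there). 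One simplification: you do not need, and cannot directly get, that \emph{every} limit point of $\bar{\lambda}^k$ lies in $\Lambda^*$; it suffices that the \emph{one} limit point $\hat{\lambda}$ obtained along the subsequence where $\Delta_1^k+\Delta_2^{k+1}\to 0$ lies in $\Lambda^*$, after which the already-established convergence of $\|\bar{\lambda}^k-\hat{\lambda}\|$ promotes subsequential convergence to convergence of the whole sequence, and continuity delivers both claimed limits. In short, your proof is, if anything, more complete than the paper's at the final step; the price is the extra machinery (uniqueness and Lipschitzness of the primal map, Fej\'{e}r monotonicity over the dual optimal set), which the paper avoids by stating the final convergence directly.
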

\vspace{-5pt}
\begin{proof}
Based on Lemma 1 in~\cite{NedicTAC2010} and the update law of $\lambda_{i}^{k}$ in \eqref{equ:DP_DDGA_2}, it can be obtained that for any {\small$\lambda\in \mathbb{R}_{+}^{Np}$}, {\small$\sum_{i=1}^{M} \| \lambda_{i}^{k\!+\!1}\!-\!\lambda \|^{2} = \sum_{i=1}^{M} \| \Pi_{\mathbb{R}^{Np}_{+}} [\lambda_{i}^{k} \!+\!\chi^{k} \sum_{j\in \mathcal{N}_{i}} L_{ij}(\hat{\lambda}_{j}^{k}\!-\!\lambda_{i}^{k}) \!+\! \gamma^{k}g_{i}(\tilde{{\bm{u}}}_{i}^{k\!+\!1})]\!-\!\lambda \|^{2} \le \sum_{i=1}^{M} \| \lambda_{i}^{k}\!+\!\chi^{k} \sum_{j\in \mathcal{N}_{i}} L_{ij}(\hat{\lambda}_{j}^{k}\!-\!\lambda_{i}^{k}) \!+\! \gamma^{k}g_{i}(\tilde{{\bm{u}}}_{i}^{k\!+\!1}) \!-\! \lambda \|^{2} \le \sum_{i=1}^{M} \| \lambda_{i}^{k} \!+\! \chi^{k} \sum_{j\in \mathcal{N}_{i}} L_{ij}(\lambda_{j}^{k}\!+\!\zeta_{j}^{k}\!-\!\lambda_{i}^{k}) \!+\! \gamma^{k}g_{i}(\tilde{{\bm{u}}}_{i}^{k\!+\!1})  \!-\!\lambda \|^{2} \le \sum_{i=1}^{M}\| \sum_{j\in \mathcal{N}_{i}\cup \{i\}}w_{ij}^{k}\lambda_{j}^{k} \!-\! \lambda \!+\! \chi^{k}\xi_{i}^{k} \!+\! \gamma^{k}g_{i}(\tilde{{\bm{u}}}_{i}^{k\!+\!1}) \|^{2}$}, where $w_{ij}^{k}$ and $\xi_{i}^{k}$ are defined as
\begin{equation} \label{equ:wij}
	{\small
		\begin{aligned}
			w_{ii}^{k}:=1+\chi^{k}L_{ii}, \quad w_{ij}^{k}:=\chi^{k}L_{ij}, \quad \xi_{i}^{k}:=\sum_{j\in \mathcal{N}_{i}} L_{ij}\zeta_{j}^{k}.
		\end{aligned}
	}
\end{equation}
Using $w_{ij}^{k}\lambda_{j}^{k} \!-\! \lambda=(\bar{\lambda}^{k}\!-\!\lambda)+ (w_{ij}^{k}\lambda_{j}^{k} \!-\! \bar{\lambda}^{k})$, it can be further concluded that
\begin{equation} \label{equ:lambda}
	{\small
		\begin{aligned}
			&\sum_{i=1}^{M}\! \| \lambda_{i}^{k\!+\!1}\!-\!\lambda \|^{2} 
			\!\!\le\! \sum_{i=1}^{M}\! \left( \| \sum_{j\in \mathcal{N}_{i}\cup \{i\}} \!\!\! w_{ij}^{k}\lambda_{j}^{k} \!-\! \lambda \|^{2} \!\!\!+\! \| \chi^{k}\xi_{i}^{k} \!+\! \gamma^{k}g_{i}(\tilde{{\bm{u}}}_{i}^{k\!+\!1}) \|^{2} \right.
			\\
			& \qquad \left.+ 2 \! \left( \sum_{j\in \mathcal{N}_{i}\cup \{i\}}\!\!\! w_{ij}^{k}\lambda_{j}^{k} \!-\! \lambda  \right)^{\top}\! \left( \chi^{k}\xi_{i}^{k} \right) \right.
			\left.\!\!+\! 2\left( \bar{\lambda}^{k} \!-\! \lambda  \right)^{\top}\!\left(\gamma^{k}g_{i}(\tilde{{\bm{u}}}_{i}^{k\!+\!1}) \right) \right.
			\\
			& \qquad \left. + 2 \! \left( \sum_{j\in \mathcal{N}_{i}\cup \{i\}}\!\!\! w_{ij}^{k}\lambda_{j}^{k} \!-\! \bar{\lambda}^{k}  \right)^{\top}\! \left(\gamma^{k}g_{i}(\tilde{{\bm{u}}}_{i}^{k\!+\!1}) \right) \right).
		\end{aligned}
	}
\end{equation}

According to Assumptions~\ref{assump:L}, \ref{assump:dp-noise} and \eqref{equ:wij}, one can verify that 
\begin{align}
	&w_{ij}^{k}=w_{ji}^{k}, \quad \sum_{i=1}^{M}w_{ij}^{k}=\sum_{j=1}^{M}w_{ij}^{k}=\sum_{j\in \mathcal{N}_{i}\cup \{i\}}\!\!\!w_{ij}^{k}=1, \label{equ:wij_sum}
	\\
	&\mathbb{E}\left[\xi_i^k\mid \lambda_i^k\right]=0, \quad \mathbb{E}\left[\|\xi_i^k\|^2\mid \lambda_{i}^{k}\right]= \sum_{j\in \mathcal{N}_{i}}(L_{ij} \sigma_{j}^k)^2. \label{equ:xi}
\end{align}
By~\eqref{equ:wij_sum} and by the convexity of $\|\cdot\|^{2}$, we have that
\begin{equation} \label{equ:lambda2}
	{\small
		\begin{aligned}
			&\sum_{i=1}^{M} \| \sum_{j\in \mathcal{N}_{i}\cup \{i\}} \!\!\!w_{ij}^{k}\lambda_{j}^{k} \!-\! \lambda \|^{2} \!=\! \sum_{i=1}^{M} \| \sum_{j\in \mathcal{N}_{i}\cup \{i\}}\!\!\!w_{ij}^{k} \left(\lambda_{j}^{k} \!-\! \lambda\right) \|^{2}
			\\
			&\le \sum_{i=1}^{M} \sum_{j\in \mathcal{N}_{i}\cup \{i\}} \!\!\! w_{ij}^{k} \|  \left(\lambda_{j}^{k} - \lambda\right) \|^{2} \le \sum_{i=1}^{M} \| \lambda_{i}^{k} - \lambda \|^{2}.
		\end{aligned}
	}
\end{equation}
It can be obtained from~\eqref{equ:DP_DDGA_1} that for any {\small$\tilde{{\bm{u}}}_{i}\!\in \! \tilde{\mathcal{U}_{i}}(x_{i}(t))$}, {\small$J_{i}(x_{i}(t), \tilde{{\bm{u}}}^{k+1}_{i}) \!+\! (\lambda_{i}^{k})^{\top}g_{i}(\tilde{{\bm{u}}}^{k+1}_{i}) \le J_{i}(x_{i}(t), \tilde{{\bm{u}}}_{i}) + (\lambda_{i}^{k})^{\top}g_{i}(\tilde{{\bm{u}}}_{i})$}. Thus, we can further derive that
\begin{equation} \label{equ:lambda3}
	{\small
		\begin{aligned}
			&\sum_{i=1}^{M} \left( \bar{\lambda}^{k} - \lambda  \right)^{\top}\left(\gamma^{k}g_{i}(\tilde{{\bm{u}}}_{i}^{k+1}) \right)
			\\
			=&\gamma^{k}\sum_{i=1}^{M} \left( \left( \bar{\lambda}^{k} - \lambda_{i}^{k}  \right)^{\top}g_{i}(\tilde{{\bm{u}}}_{i}^{k+1}) + \left(\lambda_{i}^{k}-\lambda\right)^{\top}g_{i}(\tilde{{\bm{u}}}_{i}^{k+1}) \right.
			\\
			& \qquad  \left. + J_{i}(x_{i}(t), \tilde{{\bm{u}}}_{i}^{k+1}) - J_{i}(x_{i}(t), \tilde{{\bm{u}}}_{i}^{k+1}) \right)
			\\
			\le&\gamma^{k}\sum_{i=1}^{M} \left( \left( \bar{\lambda}^{k} - \lambda_{i}^{k}  \right)^{\top}g_{i}(\tilde{{\bm{u}}}_{i}^{k+1}) + \left(\lambda_{i}^{k}-\bar{\lambda}^{k}\right)^{\top}g_{i}(\tilde{{\bm{u}}}_{i}) \right)
			\\
			& +\gamma^{k}\left( \mathcal{L}(\{\tilde{{\bm{u}}}_{i}\}, \bar{\lambda}^{k})- \mathcal{L}(\{\tilde{{\bm{u}}}_{i}^{k+1}\}, \lambda)  \right).
		\end{aligned}
	}
\end{equation}
Using \eqref{equ:xi}-\eqref{equ:lambda3} and the fact that $\|g_{i}(\tilde{{\bm{u}}}_{i})\|\le C_{g}$, $\forall \tilde{{\bm{u}}}_{i}\in \tilde{\mathcal{U}_{i}}(x_{i}(t))$, we can take the conditional expectation with respect to $\mathcal{F}^{k}=\{ \lambda_{\ell}^{k}, \tilde{\bm{u}}_{\ell}^{k+1}; 0\le \ell \le k\}$ in \eqref{equ:lambda} to obtain
\begin{equation} \label{equ:E}
	{\small
		\begin{aligned}
			&\sum_{i=1}^{M} \mathbb{E}\left[ \| \lambda_{i}^{k+1}\!-\!\lambda \|^{2} | \mathcal{F}^{k} \right]
			\\
			\le &\sum_{i=1}^{M} \| \lambda_{i}^{k} \!-\! \lambda \|^{2} \!+\! d^{k} \!+\! 2\gamma^{k}\left( \mathcal{L}(\{\tilde{{\bm{u}}}_{i}\}, \bar{\lambda}^{k})\!-\! \mathcal{L}(\{\tilde{{\bm{u}}}_{i}^{k\!+\!1}\}, \lambda)  \right),
		\end{aligned}
	}
\end{equation}
where {\small$d^{k}\!=\!(\chi^{k})^{2}\sum_{i=1}^{M}\sum_{j\in \mathcal{N}_{i}}(L_{ij} \sigma_{j}^k)^2 + M(\gamma^{k})^{2}C_{g}^{2}+6C_{g}\gamma^{k}\sum_{i=1}^{M} \|\lambda_i^k-\bar{\lambda}^{k}\|$}.
Based on Assumption~\ref{assump:dp-noise}, Theorem~\ref{Th:consensus_tracking}, and the conditions for $\chi^{k}$ and $\gamma^{k}$, it can be concluded that $d^{k}$ is summable, i.e., $\sum_{k=0}^{\infty}d^{k}<\infty$.

Plugging the optimal primal-dual pair $(\{\tilde{{\bm{u}}}_{i}^{*}\}, \lambda^{*})$ into~\eqref{equ:E} and utilizing the Saddle-Point Theorem~\eqref{equ:saddle}, we can arrive at
\begin{equation} \label{equ:E2}
	{\small
		\begin{aligned}
			&\sum_{i=1}^{M} \mathbb{E}\left[ \| \lambda_{i}^{k+1}-\lambda^{*} \|^{2} | \mathcal{F}^{k} \right]
			\\
			\le &\sum_{i=1}^{M} \| \lambda_{i}^{k} \!-\! \lambda^{*} \|^{2} \!\!+\! d^{k} \!+\!2\gamma^{k} \left(\! \mathcal{L}(\{\tilde{{\bm{u}}}_{i}^{*}\}, \bar{\lambda}^{k})\!-\! \mathcal{L}(\{\tilde{{\bm{u}}}_{i}^{*}\}, \lambda^{*}) \!\right),
		\end{aligned}
	}
\end{equation}
\begin{equation} \label{equ:E3}
	{\small
		\begin{aligned}
			&\sum_{i=1}^{M} \mathbb{E}\left[ \| \lambda_{i}^{k+1}-\lambda^{*} \|^{2} | \mathcal{F}^{k} \right]
			\\
			\le &\sum_{i=1}^{M}\! \| \lambda_{i}^{k} \!-\! \lambda^{*} \|^{2} \!\!+\! d^{k} \!+\!2\gamma^{k}\!\left(\! \mathcal{L}(\{\tilde{{\bm{u}}}_{i}^{*}\}, \lambda^{*})\!-\! \mathcal{L}(\{\tilde{{\bm{u}}}_{i}^{k\!+\!1}\}, \lambda^{*}) \! \right).
		\end{aligned}
	}
\end{equation}
According to Lemma~\ref{Lemma-polyak1} and {\small$\sum_{k=0}^{\infty}d^{k}<\infty$}, it can be concluded that {\small$\gamma^{k} \left( \mathcal{L}(\{\tilde{{\bm{u}}}_{i}^{*}\}, \bar{\lambda}^{k})- \mathcal{L}(\{\tilde{{\bm{u}}}_{i}^{*}\}, \lambda^{*})  \right)$} in~\eqref{equ:E2} and {\small$\gamma^{k} \left( \mathcal{L}(\{\tilde{{\bm{u}}}_{i}^{*}\}, \lambda^{*})- \mathcal{L}(\{\tilde{{\bm{u}}}_{i}^{k+1}\}, \lambda^{*})  \right)$} in~\eqref{equ:E3}  satisfy the conditions for $\phi^{k}$ in Lemma~\ref{Lemma-polyak1}, i.e., the following relationships hold almost surely:
\begin{equation}
	{\small
		\begin{aligned}
			&\sum_{k=1}^{\infty}\gamma^{k} \left( \mathcal{L}(\{\tilde{{\bm{u}}}_{i}^{*}\}, \bar{\lambda}^{k})- \mathcal{L}(\{\tilde{{\bm{u}}}_{i}^{*}\}, \lambda^{*})  \right) < \infty,
			\\
			&\sum_{k=1}^{\infty}\gamma^{k} \left( \mathcal{L}(\{\tilde{{\bm{u}}}_{i}^{*}\}, \lambda^{*})- \mathcal{L}(\{\tilde{{\bm{u}}}_{i}^{k+1}\}, \lambda^{*})  \right) < \infty.
		\end{aligned}
	}
\end{equation}
Since $\gamma^{k}$ is non-summable, we have that {\small$\mathcal{L}(\{\tilde{{\bm{u}}}_{i}^{*}\}, \bar{\lambda}^{k})- \mathcal{L}(\{\tilde{{\bm{u}}}_{i}^{*}\}, \lambda^{*})$} and {\small$\mathcal{L}(\{\tilde{{\bm{u}}}_{i}^{*}\}, \lambda^{*})- \mathcal{L}(\{\tilde{{\bm{u}}}_{i}^{k+1}\}, \lambda^{*})$} converge to zero almost surely.
\end{proof}
\vspace{-10pt}
\begin{remark} \label{remark:chi_gamma}
The weakening factor sequence $\{\chi^{k}\}$ is employed to mitigate the influence of persistent DP noise, and the step-size sequence $\{\gamma^{k}\}$ should be appropriately selected to work with $\{\chi^{k}\}$ for guaranteed convergence. 
The conditions for the sequences $\{\chi^{k}\}$ and $\{\gamma^{k}\}$ in Theorems~\ref{Th:consensus_tracking} and \ref{Th:convergence} can be satisfied, e.g., by selecting $\chi^{k} = \frac{c_1}{1+c_{2}k^{c_3}}$ and $\gamma^{k}=\frac{c_4}{1+c_{5}k}$ with any $c_{1}>0$, $c_{2}>0$, $0.5<c_{3}<1$, $c_{4}>0$, and $c_{5}>0$. Note that the design of $\chi^{k}$ in this example is identical to the one in Remark~\ref{remark:DP-noise}. Therefore, the sequences $\{\chi^{k}\}$, $\{\gamma^{k}\}$, and $\{\nu^{k}\}$ can be meticulously tailored to meet all conditions required by Assumption~\ref{assump:dp-noise} and Theorems~\ref{Th:consensus_tracking}, \ref{Th:convergence}.
\end{remark}

\vspace{-15pt}
\subsection{Privacy Analysis}
Using the adjacency concept defined in Definition~\ref{de:adjacency}, we establish two adjacent distributed optimization problems, denoted as $\mathcal{P}$ and $\mathcal{P}'$. There is only one signal that differs between these two problems, and without loss of generality we denote it as $g_i$ in $\mathcal{P}$ and $g'_i$ in $\mathcal{P}'$. 
Per the third condition of Definition~\ref{de:adjacency}, the signals $g_i$ and $g'_i$ should exhibit similar behavior near the optimal solution. Specifically, $g_i$ and $g'_i$ should converge toward each other if the algorithm guarantees convergence to the optimal solution. 
Leveraging the proven convergence from Theorem~\ref{Th:convergence}, we formalize this condition by requiring the existence of a constant
$C>0$ such that:
\begin{equation}\label{eq:r_convergence_rate}
\|g_i(\tilde{\bm{u}}^{k+1}_{i})-{g^{'}_i(\tilde{\bm{u}}'^{k+1}_{i})} \|_1\leq C\chi^k
\end{equation}
holds for all $k\geq 0$.

For Algorithm~\ref{algo:dual_privacy}, an execution is represented as {\small$\mathcal{A}=\{\vartheta^0,\vartheta^1,\ldots\}$} with {\small$\vartheta^k=\lambda^k = \begin{bmatrix}
	(\lambda_{1}^{k})^{\top}, \cdots, (\lambda_{M}^{k})^{\top}
\end{bmatrix}^{\top}$}. An observation sequence is denoted as {\small$\mathcal{O}=\{o^0,o^1,\ldots\}$} with {\small$o^k=\hat{\lambda}^k= \begin{bmatrix}
	(\hat{\lambda}_{1}^{k})^{\top}, \cdots, (\hat{\lambda}_{M}^{k})^{\top}
\end{bmatrix}^{\top}$} (note that {\small$\hat{\lambda}_i^k=\lambda_i^k+\zeta_i^{k}$}, as detailed in Algorithm~\ref{algo:dual_privacy}). Similar to the sensitivity metric for constraint-free distributed optimization in \cite{Huang2015}, we formulate the sensitivity of Algorithm~\ref{algo:dual_privacy} as follows:
\vspace{-5pt}
\begin{definition}\label{de:sensitivity}
At each iteration $k$, for any two adjacent distributed optimization problems $\mathcal{P}$ and $\mathcal{P}^{'}$ and any initial state $\vartheta^0$, the sensitivity of Algorithm~\ref{algo:dual_privacy} is given by
\begin{equation}
	{\small
		\begin{aligned}
			\Delta^k:= \sup\limits_{\mathcal{O}\in\mathbb{O}}\left\{\sup\limits_{\vartheta\in\mathcal{R}_{\mathcal{P},\vartheta^0}^{-1}(\mathcal{O}),\:\vartheta^{'}\in\mathcal{R}_{\mathcal{P}^{'},\vartheta^0}^{-1}(\mathcal{O})}\hspace{-0.3cm}\|\vartheta^{k}-\vartheta'^{k}\|_1\right\},
		\end{aligned}
	}
\end{equation}
where $\mathbb{O}$ denotes the set of all possible observation sequences.
\end{definition}
\vspace{-5pt}
Given Definition~\ref{de:sensitivity}, we have the following lemma:
\vspace{-5pt}
\begin{lemma}\label{Le:Laplacian}
In Algorithm~\ref{algo:dual_privacy}, at each iteration $k$, if each subsystem's DP noise vector $\zeta_i^k\in\mathbb{R}^{Np}$ comprises $Np$ independent Laplace noises with parameter $\nu^k$, satisfying $\sum_{k=1}^{T_0}\frac{\Delta^k}{\nu^k}\leq \bar\epsilon$ for some $\bar\epsilon>0$, then Algorithm~\ref{algo:dual_privacy} achieves $\epsilon$-differential privacy with the cumulative privacy level for iterations $0\leq k\leq T_0$ less than $\bar\epsilon$.
\end{lemma}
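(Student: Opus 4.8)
The plan is to verify Definition~\ref{de:differential_privacy} directly by bounding, pointwise in the observation, the ratio of the probability densities of generating a common observation sequence under the two adjacent problems $\mathcal{P}$ and $\mathcal{P}'$, and then recognizing the resulting bound as an iterated Laplace mechanism. First I would fix two adjacent problems, the shared initial state $\vartheta^0 = \lambda^0$, and an arbitrary observation set $\mathcal{O}_s \subseteq \mathbb{O}$. The key preliminary is that, given $\lambda^0$ and an observation sequence $\{o^0, o^1, \ldots\}$ with $o^k = \hat\lambda^k$, the entire execution $\{\lambda^k\}$ and hence the noise realization $\{\zeta^k\}$ are uniquely reconstructed: at each $k$ the update~\eqref{equ:DP_DDGA_2} produces $\lambda^{k+1}$ deterministically from the known $\lambda^k$, the observed neighbor values $\hat\lambda_j^k = o_j^k$, and the local term $g_i(\tilde{\bm{u}}_i^{k+1})$ (a deterministic function of $\lambda_i^k$ and the problem data through~\eqref{equ:DP_DDGA_1}), after which $\zeta^k = o^k - \lambda^k$. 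Because $o^k = \lambda^k + \zeta^k$ with $\lambda^k$ fixed by the history, this map has unit Jacobian, enabling a clean change of variables between noise and observation.

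Second, I would express $\mathbb{P}[\mathcal{R}_{\mathcal{P},\vartheta^0}(\mathcal{O}_s)]$ as the integral over $\mathcal{O}_s$ of the joint observation density, which factorizes across iterations $k$ and across the $Np$ independent components of each $\zeta_i^k$. Since every entry of $\zeta_i^k$ is an independent ${\rm Lap}(\nu^k)$ variable, the conditional density of $o^k$ given its history under $\mathcal{P}$ is $\prod_{i,\ell} \frac{1}{2\nu^k} \exp(-|o^k_{i,\ell} - \lambda^k_{i,\ell}|/\nu^k)$, and likewise under $\mathcal{P}'$ with the $\mathcal{P}'$-trajectory $\lambda'^k$ replacing $\lambda^k$.

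Third, I would take the ratio of the two joint densities at a common observation and apply the triangle inequality $|o^k_{i,\ell} - \lambda'^k_{i,\ell}| - |o^k_{i,\ell} - \lambda^k_{i,\ell}| \le |\lambda^k_{i,\ell} - \lambda'^k_{i,\ell}|$ componentwise. Summing over $(i,\ell)$ bounds the iteration-$k$ factor by $\exp(\|\lambda^k - \lambda'^k\|_1/\nu^k)$, and since the reconstructed $\mathcal{P}$- and $\mathcal{P}'$-executions are both consistent with the same observation, Definition~\ref{de:sensitivity} gives $\|\lambda^k - \lambda'^k\|_1 \le \Delta^k$, so the factor is at most $e^{\Delta^k/\nu^k}$. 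Noting that the $k=0$ factor equals one because the two problems share the initial state $\lambda^0$, multiplying over $k = 1, \ldots, T_0$ bounds the density ratio by $\exp(\sum_{k=1}^{T_0} \Delta^k/\nu^k) \le e^{\bar\epsilon}$; integrating this pointwise bound over $\mathcal{O}_s$ transfers it to the probabilities, yielding $\bar\epsilon$-differential privacy over $0 \le k \le T_0$.

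The step I expect to be the main obstacle is the rigorous setup of the change of variables and the conditional factorization of the density, where one must justify that, conditioned on the past observations, $\lambda^k$ is fully deterministic so that the only randomness entering $o^k$ is the independent noise $\zeta^k$, and that the projection $\Pi_{\mathbb{R}^{Np}_+}[\cdot]$ in~\eqref{equ:DP_DDGA_2}, being applied to a history-determined argument, preserves this determinism rather than introducing measure-theoretic complications. A secondary technical point is confirming that $\Delta^k$ is finite for every $k \le T_0$, which rests on the bounded-divergence estimate~\eqref{eq:r_convergence_rate} guaranteed by the convergence result of Theorem~\ref{Th:convergence}.
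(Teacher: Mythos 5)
Your proposal is correct and follows essentially the same route as the paper: the paper's proof is a one-line deferral to Lemma~2 of~\cite{Huang2015}, and the argument you spell out---reconstructing the execution from the observation sequence, factorizing the Laplace densities, bounding the per-iteration ratio by $e^{\Delta^k/\nu^k}$ via the triangle inequality and Definition~\ref{de:sensitivity}, and composing over $k=1,\ldots,T_0$---is precisely the reasoning of that cited lemma. Your two flagged concerns are not genuine obstacles: the projection acts on a history-determined argument so determinism is preserved, and finiteness of $\Delta^k$ is implicit in the lemma's hypothesis $\sum_{k=1}^{T_0}\Delta^k/\nu^k\leq\bar\epsilon$.
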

\vspace{-5pt}
\begin{proof}
The proof of this lemma follows the same reasoning as that of Lemma~2 in~\cite{Huang2015}.
\end{proof}
\vspace{-5pt}
We also introduce the following lemma for privacy analysis:
\vspace{-5pt}
\begin{lemma}(Lemma 4, \cite{WangTAC2023}) \label{le:chung}
Let $\{\psi^k\}$ be a non-negative sequence, and $\{a^k\}$ and $\{\varpi^k\}$ be positive sequences satisfying $\sum_{k=0}^{\infty}a^k=\infty$, $\lim_{k\rightarrow \infty} a^k =0$, and $\frac{\varpi^k}{a^k}$ converges to zero with a polynomial rate. If there exists a $\bar{K}\geq 0$ such that $ \psi^{k+1} \le(1-a^k) \psi^k +\varpi^k$ holds for all $k\geq \bar{K}$,
then it follows that $\psi^k\leq \bar{C} \frac{\varpi^k}{a^k}$ for all $k$, with $\bar{C}$ being some constant.
\end{lemma}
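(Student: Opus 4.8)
The plan is to treat this purely deterministic statement as a discrete Chung/Gr\"onwall-type comparison estimate and prove it by induction against the candidate envelope $\bar{C}\,\varpi^k/a^k$. First I would introduce the shorthand $b^k:=\varpi^k/a^k$, so that the hypothesis ``$\varpi^k/a^k\to 0$ with a polynomial rate'' becomes a statement about $\{b^k\}$ alone, and rewrite the given recursion as $\psi^{k+1}\le(1-a^k)\psi^k+a^k b^k$ for all $k\ge\bar{K}$. Since $a^k\to 0$, there is an index $K_0\ge\bar{K}$ beyond which $0<a^k<1$, so $(1-a^k)>0$ and the recursion genuinely contracts $\psi^k$ toward $b^k$. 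The target is then to produce a constant $\bar{C}$ with $\psi^k\le\bar{C}\,b^k$ for every $k$.

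For the inductive step I would assume $\psi^k\le\bar{C}\,b^k$; the recursion yields $\psi^{k+1}\le(1-a^k)\bar{C}\,b^k+a^k b^k=\bar{C}\,b^k-a^k b^k(\bar{C}-1)$, and the desired bound $\psi^{k+1}\le\bar{C}\,b^{k+1}$ reduces, after dividing through by $a^k b^k>0$, to the single scalar inequality $\bar{C}\,(1-r^k)\ge 1$, where $r^k:=\frac{1}{a^k}\bigl(1-\tfrac{b^{k+1}}{b^k}\bigr)$. Hence the entire argument hinges on controlling $r^k$: as soon as $r^k\le r_0<1$ for all large $k$, any $\bar{C}\ge 1/(1-r_0)$ closes the induction.

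The remaining task is to show $\limsup_{k\to\infty}r^k<1$, and this is exactly where the polynomial-rate hypothesis, together with $a^k\to 0$ and $\sum_k a^k=\infty$, is used. For a polynomially decaying $b^k\sim c\,k^{-s}$ one has $1-b^{k+1}/b^k=\tfrac{s}{k}+o(\tfrac1k)$, while $\sum_k a^k=\infty$ with $a^k\to 0$ forces $a^k$ to decay no faster than a comparable power of $1/k$; the quantitative meaning of ``polynomial rate'' in \cite{WangTAC2023} is calibrated precisely so that $r^k=\frac{1}{a^k}(1-b^{k+1}/b^k)$ stays bounded away from $1$ (indeed, it typically tends to $0$). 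Granting this, I would fix $K_0\ge\bar{K}$ with $a^k<1$ and $r^k\le r_0<1$ for $k\ge K_0$, set $\bar{C}:=\max\{(1-r_0)^{-1},\ \max_{0\le k\le K_0}\psi^k/b^k\}$ (finite, since each $\psi^k$ is finite and $b^k>0$), check $\psi^k\le\bar{C}\,b^k$ directly for $k\le K_0$ and by the induction above for $k>K_0$, thereby obtaining the claim for \emph{all} $k$.

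The main obstacle is the last step: making ``converges to zero with a polynomial rate'' quantitatively precise enough to guarantee $\limsup_k r^k<1$. This is delicate because it couples the decay of $b^k=\varpi^k/a^k$ to that of $a^k$ --- if $a^k$ decayed too slowly relative to the increments $b^k-b^{k+1}$, the factor $r^k$ could exceed $1$ and the contraction would fail. Everything else, namely the algebra of the inductive step and the finite-initial-segment bookkeeping, is routine, so the effort concentrates on stating the polynomial-rate condition in a form that pins down some $r_0<1$; alternatively, one may simply invoke the estimate as Lemma~4 of \cite{WangTAC2023}.
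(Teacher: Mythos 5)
The first thing to note is that the paper contains no proof of this lemma at all: the statement is imported verbatim, and the parenthetical citation of Lemma~4 in \cite{WangTAC2023} \emph{is} the paper's entire justification. So the fallback you mention in your last sentence---simply invoke the reference---is precisely what the authors do. As for your attempted direct proof: the inductive skeleton is the standard and correct one for Chung-type estimates, and your algebra is right; with $b^k:=\varpi^k/a^k$, the inductive step $\psi^{k+1}\le\bar{C}b^{k+1}$ reduces exactly to $\bar{C}(1-r^k)\ge 1$ with $r^k=\frac{1}{a^k}\bigl(1-b^{k+1}/b^k\bigr)$, and the finite initial segment is absorbed by enlarging $\bar{C}$.

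However, the step you yourself flag as the ``main obstacle'' is a genuine gap, and it cannot be closed from the hypotheses as stated, because your key heuristic claim---that $\sum_k a^k=\infty$ together with $a^k\to0$ forces $a^k$ to decay slowly enough to keep $r^k$ bounded away from $1$---is false. Take $a^k=\frac{1}{k\log k}$ and $\varpi^k=\frac{a^k}{k}$ for $k\ge 2$: then $\sum_k a^k=\infty$, $a^k\to 0$, and $b^k=\varpi^k/a^k=1/k\to 0$ at a polynomial rate, so every stated hypothesis holds; yet $r^k=\frac{k\log k}{k+1}\sim\log k\to\infty$, so no $r_0<1$ exists. Worse, the conclusion itself fails for this data, not merely your proof of it: taking equality in the recursion and using $\varpi^k>0$ gives $\psi^k\ge\psi^{K_0}\prod_{j=K_0}^{k-1}(1-a^j)\ge c(\log k)^{-2}$, which is not $O(1/k)$. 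A similar failure occurs even for $a^k=1/k$ when $b^k=k^{-s}$ with $s>1$ (then $r^k\to s>1$, and indeed $\psi^k\sim k^{-1}\gg k^{-s}$); this is exactly why the classical Chung lemma carries a coefficient condition of the form $a>s$. The upshot is that the lemma as transcribed here is not a self-contained true statement: it tacitly relies on the quantitative assumptions of \cite{WangTAC2023}, which couple the decay of $a^k$ to the polynomial rate of $\varpi^k/a^k$. Those assumptions do hold where the paper applies the lemma---in the proof of Theorem~\ref{th:DP_Algorithm1}, inequality \eqref{eq:delta1} has $a^k=|L_{ii}|\chi^k\sim k^{-c_3}$ with $c_3<1$ and $b^k\sim\gamma^k\sim k^{-1}$ (Remark~\ref{remark:chi_gamma}), so $r^k\sim k^{c_3-1}\to 0$ and your induction would close with any $\bar{C}>1$---but they are not consequences of the hypotheses listed in the statement. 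So either import those precise conditions (e.g., $a^k$ decaying polynomially with exponent strictly less than $1$ while $b^k$ decays polynomially), after which your argument goes through, or cite \cite{WangTAC2023} as the paper does.
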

\vspace{-5pt}
\begin{theorem}\label{th:DP_Algorithm1}
Suppose the conditions of Theorem \ref{Th:consensus_tracking} hold. If every element of $\zeta_i^k$ is independently sampled from Laplace distribution ${\rm Lap}(\nu^k)$, where $(\sigma_i^k)^2=2(\nu^k)^2$ satisfies Assumption \ref{assump:dp-noise}, then the following results hold: $\left. 1 \right)$ For any finite number of iterations $T$, Algorithm~\ref{algo:dual_privacy} ensures  $\epsilon$-differential privacy, and the cumulative privacy budget is bounded by $\epsilon\leq \sum_{k=1}^{T}\frac{C\nu^k}{\nu^k}$, where $\varsigma^k:= \sum_{s=1}^{k-1} \Pi_{q=s}^{k-1}(1\!-\!\chi^q\bar{L}) \gamma^{s-1}\chi^{s-1} \!+\!\gamma^{k-1}\chi^{k-1}$, $\bar{L}:=\min_{i\in \mathbb{Z}_{1}^{M}}|L_{ii}|$, and $C$ is from \eqref{eq:r_convergence_rate}; $\left. 2 \right)$ If $\sum_{k=0}^{\infty}\frac{\gamma^k}{\nu^k}<\infty$ holds, the cumulative privacy budget remains finite as $T\rightarrow\infty$.
\end{theorem}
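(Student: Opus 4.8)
The plan is to reduce the cumulative privacy budget to a bound on the sensitivity $\Delta^k$, which by Lemma~\ref{Le:Laplacian} yields $\epsilon\leq\sum_{k=1}^{T}\Delta^k/\nu^k$, so the crux is to show $\Delta^k\leq C\varsigma^k$. First I would observe that, once the adversary's observation sequence $\mathcal{O}=\{\hat\lambda^k\}$ and the common initial state $\vartheta^0=\lambda^0$ are fixed, the internal trajectory $\{\lambda^k\}$ consistent with $\mathcal{O}$ is \emph{uniquely} determined by the local constraint functions: since the injected noise $\zeta_i^k=\hat\lambda_i^k-\lambda_i^k$ has unbounded support, any internal value $\lambda_i^k$ is compatible with the observed $\hat\lambda_i^k$, while the update~\eqref{equ:DP_DDGA_2} propagates $\lambda_i^k$ forward using only the \emph{observed} neighbor messages $\hat\lambda_j^k$ together with the local data $g_i,J_i$. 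Comparing the two adjacent problems $\mathcal{P}$ and $\mathcal{P}'$, I would then prove by induction that every non-differing subsystem stays identical, i.e. $\lambda_j^k=\lambda_j'^k$ for all $j\neq i$ and all $k$: the base case is the shared initial state, and the inductive step holds because the argmin~\eqref{equ:DP_DDGA_1} for subsystem $j$ depends only on $\lambda_j^k$ (which equals $\lambda_j'^k$), the functions $g_j=g_j'$ agree, and the received messages $\hat\lambda_l^k$ are common to both executions. Consequently the divergence is confined to subsystem $i$, and $\Delta^k=\sup\|\lambda_i^k-\lambda_i'^k\|_1$.

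Second, I would set $e^k:=\|\lambda_i^k-\lambda_i'^k\|_1$ and derive a scalar recursion. Subtracting the two update laws~\eqref{equ:DP_DDGA_2} for subsystem $i$, the neighbor terms $\chi^k\sum_{j}L_{ij}\hat\lambda_j^k$ cancel because the observations coincide, leaving the pre-projection difference $(1+\chi^kL_{ii})(\lambda_i^k-\lambda_i'^k)+\gamma^k\big(g_i(\tilde{\bm{u}}_i^{k+1})-g_i'(\tilde{\bm{u}}_i'^{k+1})\big)$, where I use $\sum_{j\in\mathcal{N}_i}L_{ij}=-L_{ii}=|L_{ii}|$. Invoking the non-expansiveness of the coordinate-wise projection $\Pi_{\mathbb{R}^{Np}_+}[\cdot]$ in the $L_1$ norm, the adjacency bound~\eqref{eq:r_convergence_rate}, and (for $k$ large enough that $\chi^k|L_{ii}|<1$) replacing $|L_{ii}|$ by $\bar{L}=\min_i|L_{ii}|$, I obtain $e^{k+1}\leq(1-\chi^k\bar{L})e^k+C\gamma^k\chi^k$ with $e^0=0$. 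Unrolling this linear recursion produces exactly $e^k\leq C\varsigma^k$ with $\varsigma^k$ as defined in the statement, the empty product furnishing the trailing $\gamma^{k-1}\chi^{k-1}$ term; hence $\Delta^k\leq C\varsigma^k$ and part~1) follows from Lemma~\ref{Le:Laplacian}.

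For part~2) I would feed the same recursion into Lemma~\ref{le:chung} with $\psi^k=\varsigma^k$, $a^k=\chi^k\bar{L}$, and $\varpi^k=\gamma^k\chi^k$. The hypotheses are met since $\sum_k\chi^k=\infty$ gives $\sum_ka^k=\infty$, $\chi^k\to0$ gives $a^k\to0$, and $\varpi^k/a^k=\gamma^k/\bar{L}\to0$ at the polynomial rate inherited from the admissible choice $\gamma^k=c_4/(1+c_5k)$. The lemma then yields $\varsigma^k\leq\bar{C}\gamma^k/\bar{L}$ for some constant $\bar{C}$, so the cumulative budget obeys $\epsilon\leq\sum_{k=1}^{\infty}C\varsigma^k/\nu^k\leq(C\bar{C}/\bar{L})\sum_{k=1}^{\infty}\gamma^k/\nu^k<\infty$ under the stated summability condition.

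I expect the sensitivity-reduction step to be the main obstacle: one must argue carefully that fixing the adversary's observation decouples each subsystem's internal recursion from its neighbors' true internal states (only the noisy broadcasts, not the actual $\lambda_j^k$, enter the update), so that the two adjacent executions differ \emph{only} through $g_i$ versus $g_i'$. A secondary technicality is ensuring the contraction coefficient $1-\chi^k|L_{ii}|$ is non-negative before bounding it by $1-\chi^k\bar{L}$; this holds for all large $k$ because $\chi^k\to0$, and the finitely many initial iterations contribute only a finite additive constant to $\epsilon$.
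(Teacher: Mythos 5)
Your proposal follows essentially the same route as the paper's proof: you reduce the sensitivity to the single differing subsystem $i$, derive the contraction recursion $\Delta^{k+1}\le(1-\chi^k\bar{L})\Delta^k+C\gamma^k\chi^k$ from the update~\eqref{equ:DP_DDGA_2} and the adjacency bound~\eqref{eq:r_convergence_rate}, unroll it to get $\Delta^k\le C\varsigma^k$ for part~1) via Lemma~\ref{Le:Laplacian}, and invoke Lemma~\ref{le:chung} to obtain $\Delta^k\le\bar{C}\gamma^k$ and hence a finite budget under $\sum_k\gamma^k/\nu^k<\infty$ for part~2). In fact, your treatment is slightly more careful than the paper's on two points it leaves implicit — the induction showing non-differing subsystems' trajectories coincide, and the non-negativity of the coefficient $1-\chi^k|L_{ii}|$ for small $k$ — so no gaps need to be flagged.
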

\vspace{-5pt}
\begin{proof}
To establish the privacy guarantees, we begin by analyzing the sensitivity of Algorithm~\ref{algo:dual_privacy}. Given any initial state $\lambda^0$, any fixed observation $\mathcal{O}$, and two adjacent distributed optimization problems $\mathcal{P}$ and $\mathcal{P}^{'}$, the sensitivity depends on $\|\lambda^{k}-\lambda'^{k}\|_1$ as per Definition \ref{de:sensitivity}. Note that $\mathcal{P}$ and $\mathcal{P}^{'}$ differ solely in one signal, and without loss of generality, we denote this distinct signal as the $i$th one, i.e., $g_i$ in $\mathcal{P}$ and $g^{'}_i$ in $\mathcal{P}^{'}$.
Since the initial conditions and observations of $\mathcal{P}$ and $\mathcal{P}^{'}$ are the same for $j\neq i$, it follows that $\lambda_j^k=\lambda'^{k}_j$ for all $k$ and $j\neq i$. Consequently, $\|\lambda^{k}-\lambda'^{k}\|_1$ is always equal to $\|\lambda_i^{k}-\lambda'^{k}_i\|_1$.

Based on \eqref{equ:DP_DDGA_2} in Algorithm~\ref{algo:dual_privacy}, $L_{ii}:=-\sum_{j\in\mathbb{N}_i}L_{ij}$, and the fact that the observations $\lambda_j^k+\zeta_j^k$ and ${\lambda'_j}^k+{\zeta'_j}^k$ are identical, we can derive that
\begin{equation} \label{equ:delta_lambda}
	{\small
		\begin{aligned}
			\|\lambda_i^{k+1}-{\lambda'_i}^{k+1}\|_1
			\leq  &(1-|L_{ii}|\chi^k)\|\lambda_i^k-{\lambda'_i}^k\|_1 
			\\
			&\quad 
			+\gamma^k \|g_i(\tilde{\bm{u}}_{i}^{k+1})-{g^{'}_i(\tilde{\bm{u}}'^{k+1}_{i})}\|_1.
		\end{aligned}
	}
\end{equation}
From \eqref{eq:r_convergence_rate} and \eqref{equ:delta_lambda}, it follows that 
\begin{equation}\label{eq:delta1}
{\small
	\begin{aligned}
		\Delta^{k+1}
		\leq (1-|L_{ii}|\chi^k)\Delta^{k}+C \gamma^k \chi^k.
	\end{aligned}
}
\end{equation}
Using Lemma~\ref{Le:Laplacian} and~\eqref{eq:delta1}, the first statement is established.

Lemma~\ref{le:chung} is applied to prove the second statement of Theorem~\ref{th:DP_Algorithm1}. Specifically, based on \eqref{eq:delta1} and the properties of $\chi^k$ and $\gamma^k$, Lemma~\ref{le:chung} implies that there exists some constant $\bar{C}$ such that the sensitivity $\Delta^{k}$ satisfies $\Delta^{k}\le \bar{C}\gamma^k$. It can be further obtained from Lemma \ref{Le:Laplacian} that $\epsilon\leq \sum_{k=1}^{T}\frac{\bar{C}\gamma^k}{\nu^k}$. Thus, if $\sum_{k=0}^{\infty}\frac{\gamma^k}{\nu^k}<\infty$ holds (i.e., the sequence $\{\frac{\gamma^k}{\nu^k}\}$ is summable), then $\epsilon$ will be finite even when $T\rightarrow\infty$.
\end{proof}
\vspace{-5pt}
For $\epsilon$-differential privacy (see Definition~\ref{de:differential_privacy}), a smaller $\epsilon$ indicates a better extent of privacy preservation. According to Theorem~\ref{th:DP_Algorithm1}, for given $C$ and $\varsigma^k$, a higher noise level $\nu^{k}$ results in a smaller $\epsilon$, thereby enhancing privacy protection.
\vspace{-5pt}
\begin{remark} \label{remark:overhead}
The DP noise injection mechanism in Algorithm~\ref{algo:dual_privacy} is computationally efficient and easy to implement. To mitigate the impact of DP noise, careful design of the weakening factor and step-size sequences is essential. When the DP noise intensity is high, the algorithm may require more iterations to converge compared to non-privacy-preserving methods. This trade-off is necessary to achieve both $\epsilon$-differential privacy and provable convergence to the optimal solution.
\end{remark}
\vspace{-10pt}
\section{Implementation of Privacy-Preserving DMPC} \label{sec:MPC}
In this section, the overall implementation strategy of DMPC is described.
\vspace{-5pt}
\subsection{Algorithm Implementation}
Algorithm~\ref{algo:dual_privacy} will terminate after $\bar{k}$ iterations. Note that Algorithm~\ref{algo:dual_privacy} converges almost surely in a probabilistic sense, and thus the global constraints~\eqref{equ:globalConstraint} may not necessarily be satisfied within a finite number of iterations. Based on~\eqref{equ:fb}, one can verify that the global constraints are satisfied if the following condition holds:
\begin{equation} \label{equ:g_constraint}
{\small
	\begin{aligned}
		\sum_{i=1}^{M}g_{i}(\tilde{{\bm{u}}}_{i}^{\bar{k}}) = \sum_{i=1}^{M}f_{i}(x_{i}(t), \tilde{{\bm{u}}}_{i}^{\bar{k}}) - b(\varepsilon) \le \varepsilon M {\bf 1}_{Np}.
	\end{aligned}
}
\end{equation}
To verify whether the global constraints are satisfied after the termination of Algorithm~\ref{algo:dual_privacy}, we employ a privacy-preserving static average consensus method developed in~\cite{WangTAC2019}.

Specifically, after Algorithm~\ref{algo:dual_privacy} terminates, each subsystem initializes $z_{i}^{0} = g_{i}(\tilde{{\bm{u}}}_{i}^{\bar{k}})= f_{i}(x_{i}(t), \tilde{{\bm{u}}}_{i}^{\bar{k}}) - \frac{b(\varepsilon)}{M}$. Then, $z_{i}^{0}$ is decomposed into two substates $z_{i, \alpha}^{0}$ and $z_{i, \beta}^{0}$, where $z_{i, \alpha}^{0}$ and $z_{i, \beta}^{0}$ are randomly chosen from the set of all real numbers with the constraint $z_{i, \alpha}^{0}+z_{i, \beta}^{0}=2z_{i}^{0}$. The static average consensus method updates $z_{i,\alpha}^{\ell}$ and $z_{i,\beta}^{\ell}$ as follows:
\begin{equation} \label{equ:static_consensus}
{\small
	\begin{aligned}
		z_{i,\alpha}^{\ell+1} &= z_{i,\alpha}^{\ell} + \iota \sum_{j\in \mathcal{N}_{i}} a_{ij}^{\ell}(z_{j,\alpha}^{\ell}-z_{i,\alpha}^{\ell}) + \iota a_{i,\alpha \beta}^{\ell}(z_{i,\beta}^{\ell}-z_{i,\alpha}^{\ell}),
		\\
		z_{i,\beta}^{\ell+1} &= z_{i,\beta}^{\ell} + \iota a_{i,\alpha \beta}^{\ell}(z_{i,\alpha}^{\ell}-z_{i,\beta}^{\ell}),
	\end{aligned}
}
\end{equation}
where $\iota$, $a_{i,\alpha \beta}^{\ell}$, $a_{i,j}^{\ell} \in \mathbb{R}_{+}$. As proven in \cite{WangTAC2019}, by appropriately selecting the parameters $\iota$, $a_{i,\alpha \beta}^{\ell}$, and $a_{i,j}^{\ell}$, $z_{i,\alpha}^{\ell}$ and $z_{i,\beta}^{\ell}$ converge to the average consensus value $\frac{1}{M} \sum_{i=1}^{M} z_{i}^{0}$ (i.e., $\frac{1}{M}\sum_{i=1}^{M} g_{i}(\tilde{{\bm{u}}}_{i}^{\bar{k}})$). Therefore, each subsystem can utilize the converged value of $z_{i,\alpha}^{\ell}$ to check whether condition~\eqref{equ:g_constraint} is satisfied. While conventional static average consensus methods~\cite{Olfati-Saber2007, SundaramACC2007, HendrickxTAC2015} can compute $\frac{1}{M} \sum_{i=1}^{M} z_{i}^{0}$ in a distributed manner, they require subsystems to directly share $z_{i}^{0}$ with neighbors. Since 
$z_{i}^{0}=g_{i}(\tilde{{\bm{u}}}_{i}^{\bar{k}})$ contains sensitive information about 
$\tilde{{\bm{u}}}_{i}^{\bar{k}}$, these methods may potentially lead to privacy breaches. The privacy-preserving average consensus method in~\cite{WangTAC2019} addresses this issue using a state decomposition scheme to mask the true values of $z_{i}^{0}$. Specifically, as shown in~\eqref{equ:static_consensus}, the substate 
$z_{i,\alpha}^{\ell}$ governs internode interactions and is the only value visible to a subsystem's neighbors. Meanwhile, the other substate $z_{i,\beta}^{\ell}$ interacts solely with $z_{i,\alpha}^{\ell}$, remaining hidden from neighboring subsystems but still influencing 
$z_{i,\alpha}^{\ell}$'s evolution. 
This design ensures strong privacy protection. For more details, please refer to \cite{WangTAC2019}.

The overall DMPC strategy is presented in Algorithm~\ref{algo:DMPC_privacy}. After executing the static average consensus method, an update mechanism is designed for the control input sequence $\tilde{{\bm{u}}}_{i}(t)$. Based on the consensus results, if condition \eqref{equ:g_constraint} is met, the solution $\tilde{{\bm{u}}}_{i}^{\bar{k}}$ obtained at the current time instant is adopted as $\tilde{{\bm{u}}}_{i}(t)$. Otherwise, we implement a one-step time-shift on the previous control input sequence $\tilde{{\bm{u}}}_{i}(t-1)$ and append a terminal control action to update $\tilde{{\bm{u}}}_{i}(t)$, as shown in~\eqref{equ:u_update}. The sequence constructed via~\eqref{equ:u_update} is guaranteed to be feasible, which will be demonstrated in the first statement of Theorem~\ref{theorem:mpc}. Therefore, by combining the static average consensus method with the update mechanism for $\tilde{{\bm{u}}}_{i}(t)$, Algorithm~\ref{algo:DMPC_privacy} ensures that if the initial solution $\tilde{{\bm{u}}}_{i}^{\bar{k}}$ at $t=0$ is feasible, then feasible solutions will be maintained at all subsequent time steps—even in cases when Algorithm~\ref{algo:dual_privacy} fails to generate feasible solutions at some time instants or over multiple consecutive steps.
\begin{algorithm}[!t]
	\small
	\SetAlFnt{\small}
	\SetKwInOut{Parameter}{Parameter}
	\SetKwInOut{Output}{Output}
	\caption{Privacy-preserving DMPC Algorithm}
	\label{algo:DMPC_privacy}
	\SetAlgoLined
	At time instant $t$, every subsystem $i$ measures its state $x_{i}(t)$;
	
	Every subsystem $i$ computes $\tilde{{\bm{u}}}_{i}^{\bar{k}}$ by following Algorithm~\ref{algo:dual_privacy};
	
	Every subsystem $i$ runs the static average consensus algorithm~\eqref{equ:static_consensus} to obtain $\sum_{i=1}^{M} g_{i}(\tilde{{\bm{u}}}_{i}^{\bar{k}})$;
	
	\eIf{Condition~\eqref{equ:g_constraint} is satisfied}{Set current control input sequence $\tilde{{\bm{u}}}_{i}(t):=\{\tilde{u}_{i}(0|t), \tilde{u}_{i}(1|t), \cdots, \tilde{u}_{i}(N-1|t)\}$ as $\tilde{{\bm{u}}}_{i}(t)=\tilde{{\bm{u}}}_{i}^{\bar{k}}$;}{Use $\tilde{{\bm{u}}}_{i}(t-1)$ to update $\tilde{{\bm{u}}}_{i}(t)$, i.e.,
			\begin{equation}\label{equ:u_update}
				\begin{aligned}
					\tilde{{\bm{u}}}_{i}(t)=&\{\tilde{u}_{i}(1|t-1), \tilde{u}_{i}(2|t-1), \cdots,
					\\
					&\quad \tilde{u}_{i}(N-1|t-1), K_{i}\tilde{x}_{i}(N|t-1)\};
				\end{aligned}
	\end{equation}}
	
	Save $\tilde{{\bm{u}}}_{i}(t)$ in subsystem $i$; apply $\tilde{u}_{i}(0|t)$ to subsystem~$i$; 
	
	Wait for the next time instant; let $t=t+1$ and go to step 1.			
\end{algorithm}
\vspace{-5pt}
\subsection{Feasibility and Stability}
\begin{theorem} \label{theorem:mpc}
	Assume that $\tilde{{\bm{u}}}_{i}^{\bar{k}}$ generated from Algorithm~\ref{algo:dual_privacy} satisfies the global constraints at time instant $t=0$. Then, the following results hold: $\left. 1\right)$ If Algorithm~\ref{algo:DMPC_privacy} has a feasible solution at time instant $t$, then it has a feasible solution at $t+1$; $\left. 2 \right)$ $\sum_{i=1}^{M}J_{i}(x_{i}(t), \tilde{{\bm{u}}}_{i}(t)) - \sum_{i=1}^{M}J_{i}(x_{i}(t), \tilde{{\bm{u}}}_{i}^{*}) \le \eta$, where $\eta\in \mathbb{R}_{+}$ is a bounded constant; $\left. 3 \right)$ If $\{ x_{i}\in\mathbb{R}^{n_{i}}: \| x_{i} \|^{2}_{Q_{i}} \le \eta \}\subset \mathcal{X}^{f}_{i}$, then the state trajectory of each subsystem converges to the terminal set $\mathcal{X}^{f}_{i}$ in finite time.
\end{theorem}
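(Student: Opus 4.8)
The plan is to establish the three statements in order, since statement~1 furnishes the feasible candidate that underlies both the suboptimality bound in statement~2 and the Lyapunov descent in statement~3. For statement~1 (recursive feasibility), I would take the shifted sequence in~\eqref{equ:u_update} as the candidate at time $t+1$ and verify it lies in $\tilde{\mathcal{U}}_i(x_i(t+1))$ and meets the tightened coupled constraint. The local state/input constraints on the first $N-1$ shifted stages are inherited directly from feasibility at time $t$, while the appended terminal stage uses $K_i\tilde{x}_i(N|t)$, which remains in $\mathcal{U}_i$ and keeps the successor state in $\mathcal{X}_i^f$ by the invariance conditions in~\eqref{equ:terminal}. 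For the coupled constraint the key observation is that $b(\varepsilon)$ in~\eqref{equ:fb} is monotonically tightened along the horizon, so each shifted stage faces a strictly looser bound than the one it already satisfied at time $t$; the new terminal block is then absorbed by the third line of~\eqref{equ:terminal}, which is designed precisely so that $\sum_i(\varPsi_{x_i}+\varPsi_{u_i}K_i)\tilde{x}_i(N|t)\le(1-\varepsilon MN)\mathbf{1}_p$. Since Algorithm~\ref{algo:DMPC_privacy} adopts either the feasible output of Algorithm~\ref{algo:dual_privacy} or this shifted sequence, feasibility at $t$ always propagates to $t+1$.

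For statement~2 I would argue by compactness and continuity. By Assumption~\ref{assump:system} the sets $\mathcal{X}_i$ and $\mathcal{U}_i$ are bounded, so $\tilde{\mathcal{U}}_i(x_i(t))$ is compact and varies continuously with $x_i(t)\in\mathcal{X}_i$, and each $J_i$ is a continuous quadratic. Hence over the admissible range of states both the adopted feasible cost and the optimal cost are uniformly bounded, so their gap admits a uniform bound $\eta\in\mathbb{R}_+$; this is exactly the constant appearing in statement~3.

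The core of the proof is statement~3, which I would treat with a Lyapunov-type descent built on the shifted sequence. Evaluating the cost of~\eqref{equ:u_update} and telescoping the terminal terms through the Riccati identity~\eqref{equ:ARE} gives, for $\hat{J}(t):=\sum_i J_i(x_i(t),\tilde{\bm u}_i(t))$, the one-step relation $\hat{J}(t+1)\le\hat{J}(t)-\sum_i\|x_i(t)\|_{Q_i}^2+\eta$, where the $+\eta$ term is present only in the branch where Algorithm~\ref{algo:dual_privacy}'s output is adopted and is absent when~\eqref{equ:u_update} itself is used. Invoking the hypothesis $\{x_i:\|x_i\|_{Q_i}^2\le\eta\}\subset\mathcal{X}_i^f$, whenever any subsystem is still outside $\mathcal{X}_i^f$ one has $\|x_i(t)\|_{Q_i}^2>\eta$, hence $\sum_i\|x_i(t)\|_{Q_i}^2>\eta$ and $\hat{J}$ strictly decreases. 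Since $\hat{J}\ge0$ is bounded below, the state cannot remain outside $\bigcap_i\mathcal{X}_i^f$ indefinitely, and because $\mathcal{X}_i^f$ contains the sublevel set as a neighborhood of the origin, entry into each $\mathcal{X}_i^f$ follows.

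The hard part, and the step I would scrutinise most, is reconciling the persistent per-step error $\eta$ in the descent with \emph{finite-time} rather than merely asymptotic convergence. The difficulty is that in the branch adopting Algorithm~\ref{algo:dual_privacy}'s output the $+\eta$ term blocks $\hat{J}$ from decreasing to zero, so naive telescoping only yields $\sum_i\|x_i(t)\|_{Q_i}^2\to\eta$ on average. The resolution I would pursue exploits the strictness of the inclusion $\{x_i:\|x_i\|_{Q_i}^2\le\eta\}\subset\mathcal{X}_i^f$: because $\mathcal{X}_i^f$ strictly contains the $\eta$-sublevel set, compactness of $\partial\mathcal{X}_i^f$ yields a positive margin so that leaving $\mathcal{X}_i^f$ forces $\sum_i\|x_i(t)\|_{Q_i}^2\ge\eta+c$ for some $c>0$. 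This upgrades the strict decrease to a uniform decrease of at least $c$ per step, bounding the number of steps spent outside $\bigcap_i\mathcal{X}_i^f$ by $\hat{J}(0)/c$ and thereby delivering the claimed finite-time entry.
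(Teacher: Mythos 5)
Your proposal is correct, and for statements 1) and 2) it is essentially the paper's own argument: the shifted candidate \eqref{equ:u_update} is verified stage-by-stage using the monotone tightening of $b(\varepsilon)$ and the terminal-set conditions \eqref{equ:terminal}, and $\eta$ comes from boundedness of $\mathcal{X}_{i}$, $\mathcal{U}_{i}$, and hence of all costs along feasible trajectories. For statement 3) your route is also the same in substance---a descent inequality obtained by telescoping the Riccati identity \eqref{equ:ARE} along the shifted sequence---but with two differences worth noting. First, you use the adopted cost $\hat{J}(t)=\sum_{i}J_{i}(x_{i}(t),\tilde{\bm{u}}_{i}(t))$ as the Lyapunov function, whereas the paper uses the optimal cost $V(\{x_{i}(t)\})=\sum_{i}J_{i}(x_{i}(t),\tilde{\bm{u}}_{i}^{*})$; these are interchangeable, since via statement 2) and feasibility (not optimality) of the shifted sequence both yield the same recursion $(\cdot)(t+1)\le(\cdot)(t)+\eta-\sum_{i}\|x_{i}(t)\|^{2}_{Q_{i}}$. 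Second---and here your treatment is genuinely stronger---you correctly identify that this recursion alone gives only a \emph{strict}, not \emph{uniform}, per-step decrease while the state is outside the sublevel set $\{\sum_{i}\|x_{i}\|^{2}_{Q_{i}}\le\eta\}$, so by itself it permits asymptotic approach to that set's boundary without finite-time entry. The paper's proof stops at the recursion and simply asserts finite-time convergence, leaving this gap implicit. Your margin argument closes it: reading the hypothesis $\{x_{i}:\|x_{i}\|^{2}_{Q_{i}}\le\eta\}\subset\mathcal{X}_{i}^{f}$ as containment in the interior, compactness of the sublevel set (which follows from $Q_{i}>0$) and closedness of the complement of the interior of $\mathcal{X}_{i}^{f}$ give a constant $c>0$ with $\|x_{i}\|^{2}_{Q_{i}}\ge\eta+c$ whenever $x_{i}\notin\mathcal{X}_{i}^{f}$, upgrading the decrease to at least $c$ per step before first entry and bounding the entry time by $\hat{J}(0)/c$. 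The only caveat is that this interprets ``$\subset$'' in the theorem as strict (interior) inclusion---a mild and natural strengthening of the stated hypothesis, and precisely what is needed to make the finite-time claim rigorous; one small wording fix is that the relevant compactness is that of the sublevel set, not of $\partial\mathcal{X}_{i}^{f}$.
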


\begin{proof}
	As shown in Algorithm~\ref{algo:DMPC_privacy}, the input sequence at time instant $t$ is denoted by $\tilde{{\bm{u}}}_{i}(t)=\{\tilde{u}_{i}(0|t), \tilde{u}_{i}(1|t), \cdots, \tilde{u}_{i}(N-1|t)\}$. Let $\tilde{{\bm{x}}}_{i}(t)=\{\tilde{x}_{i}(0|t), \tilde{x}_{i}(1|t), \cdots, \tilde{x}_{i}(N|t)\}$ be the corresponding predicted state sequence. Since $\tilde{{\bm{u}}}_{i}(t)$ is a feasible solution, it can be obtained from \eqref{equ:U}, \eqref{equ:fb}, and \eqref{equ:g_constraint} that $\tilde{{\bm{u}}}_{i}(t)\in \tilde{\mathcal{U}_{i}}(x_{i}(t))$ (i.e., $\tilde{x}_{i}(\ell|t)\!\in \! \mathcal{X}_{i}, \tilde{u}_{i}(\ell|t)\!\in \! \mathcal{U}_{i}, \tilde{x}_{i}(N|t)\!\in \! \mathcal{X}_{i}^{f}, \ell \!\in \! \mathbb{Z}_{0}^{N-1}$) and
	\begin{equation} \label{equ:global_constraint}
		{\small	
			\begin{aligned}
				\sum_{i=1}^{M} \varPsi_{x_{i}}\tilde{x}_{i}(\ell|t) +\varPsi_{u_{i}}\tilde{u}_{i}(\ell|t) \le (1-\varepsilon M\ell) {\bf 1}_{p}, \ell \in \mathbb{Z}_{0}^{N-1}.
			\end{aligned}
		}
	\end{equation}
	At time instant $t+1$, an input sequence $\hat{{\bm{u}}}_{i}(t+1)$ and its corresponding predicted state sequence $\hat{{\bm{x}}}_{i}(t+1)$ are defined~as
	\begin{equation} \label{equ:hat_u_x}
		{\small
			\begin{aligned}
				\hat{{\bm{u}}}_{i}(t\!+\!1) 
				\!= &\{\hat{u}_{i}(0|t+1), \hat{u}_{i}(1|t+1), \cdots, \hat{u}_{i}(N-1|t+1)\}
				\\ 
				\!= &\{\tilde{u}_{i}(1|t), \tilde{u}_{i}(2|t), \cdots, \tilde{u}_{i}(N-1|t), K_{i}\tilde{x}_{i}(N|t)\},
				\\
				\hat{{\bm{x}}}_{i}(t\!+\!1) 
				\!= &\{\hat{x}_{i}(0|t+1), \hat{x}_{i}(1|t+1), \cdots, \hat{x}_{i}(N|t+1)\}
				\\ 
				\!= &\{\tilde{x}_{i}(1|t), \tilde{x}_{i}(2|t), \cdots, \tilde{x}_{i}(N|t), (A_{i}\!+\!B_{i}K_{i})\tilde{x}_{i}(N|t)\}.
			\end{aligned}
		}
	\end{equation} 
	Based on \eqref{equ:terminal}, \eqref{equ:global_constraint}, and \eqref{equ:hat_u_x}, it can be concluded that {\small$\hat{{\bm{u}}}_{i}(t+1)\in \tilde{\mathcal{U}_{i}}(x_{i}(t+1))$}, {\small$\sum_{i=1}^{M} \varPsi_{x_{i}}\hat{x}_{i}(\ell|t\!+\!1) \!+\!\varPsi_{u_{i}}\hat{u}_{i}(\ell|t\!+\!1)\!=\! \sum_{i=1}^{M} \varPsi_{x_{i}}\tilde{x}_{i}(\ell\!+\!1|t) \!+\!\varPsi_{u_{i}}\tilde{u}_{i}(\ell\!+\!1|t)\le (1-\varepsilon M(\ell+1)) {\bf 1}_{p}, \ell \in \mathbb{Z}_{0}^{N-2}$}, and {\small$\sum_{i=1}^{M} \varPsi_{x_{i}}\hat{x}_{i}(N\!-\!1|t\!+\!1) \!+\!\varPsi_{u_{i}}\hat{u}_{i}(N\!-\!1|t\!+\!1)\!=\! \sum_{i=1}^{M} (\varPsi_{x_{i}} \!+\! \varPsi_{u_{i}}K_{i}) \tilde{x}_{i}(N|t)
		\le (1\!-\!\varepsilon MN) {\bf 1}_{p}$}.
	Therefore, $\hat{{\bm{u}}}_{i}(t+1)$ is a feasible solution at time instant $t+1$, which completes the proof for the first statement of Theorem~\ref{theorem:mpc}. From the above analysis, it follows that the control input sequence constructed in~\eqref{equ:u_update} is feasible. Thus, if $\tilde{{\bm{u}}}_{i}^{\bar{k}}$ computed by Algorithm~\ref{algo:dual_privacy} is feasible at $t=0$, then the update mechanism for $\tilde{{\bm{u}}}_{i}(t)$ in Algorithm~\ref{algo:DMPC_privacy} ensures the solution feasibility for the remaining duration.
	
	Due to the recursive feasibility, $x_{i}(t)$ remains within~the bounded set $\mathcal{X}_{i}$, and the solution $\tilde{{\bm{u}}}_{i}^{\bar{k}}$ generated by~Algorithm~\ref{algo:dual_privacy} is confined to the bounded set {\small $\tilde{\mathcal{U}_{i}}(x_{i}(t))$}. Thus, {\small $J_{i}(x_{i}(t), \tilde{{\bm{u}}}_{i}(t))$} is bounded, and there exists a positive bounded constant $\eta$ such that {\small $\sum_{i=1}^{M}J_{i}(x_{i}(t), \tilde{{\bm{u}}}_{i}(t)) \!-\! \sum_{i=1}^{M}J_{i}(x_{i}(t), \tilde{{\bm{u}}}_{i}^{*}) \le \eta$}.
	
	To prove the third statement, we first define a Lyapunov function $V(\{x_{i}(t)\}):= \sum_{i=1}^{M}J_{i}(x_{i}(t), \tilde{{\bm{u}}}_{i}^{*})$. According to the algebraic Riccati equation~\eqref{equ:ARE} and \eqref{equ:hat_u_x}, we have  
	\begin{equation} \label{equ:J}
		{\small
			\begin{aligned}
				J_{i}(x_{i}(t\!\!+\!\!1), \hat{{\bm{u}}}_{i}(t\!\!+\!\!1)) \!-\! J_{i}(x_{i}(t), \tilde{{\bm{u}}}_{i}(t)) 
				\!=\! \!-\| x_{i}(t) \|^{2}_{Q_{i}} \!\!-\! \| \tilde{u}_{i}(0|t)\|^{2}_{R_{i}}.
			\end{aligned}
		}
	\end{equation}
	$\hat{{\bm{u}}}_{i}(t\!+\!1)$ is a feasible solution  at $t+1$ but may not be optimal. Thus, we have {\small$V(\{x_{i}(t\!+\!1)\})\le \sum_{i=1}^{M} J_{i}(x_{i}(t\!+\!1), \hat{{\bm{u}}}_{i}(t\!+\!1))= \sum_{i=1}^{M} \left( J_{i}(x_{i}(t), \tilde{{\bm{u}}}_{i}(t)) \!-\! \| x_{i}(t) \|^{2}_{Q_{i}} \!-\! \| \tilde{u}_{i}(0|t)\|^{2}_{R_{i}} \right) \le \sum_{i=1}^{M} \left( J_{i}(x_{i}(t), \tilde{{\bm{u}}}_{i}(t)) \!-\! \| x_{i}(t) \|^{2}_{Q_{i}} \right)$}, where the equality condition is due to \eqref{equ:J}. By~Statement $\left. 2\right)$, we have {\small$V(\{x_{i}(t+1)\})\le V(\{x_{i}(t)\})+\eta - \sum_{i=1}^{M} \| x_{i}(t) \|^{2}_{Q_{i}}$}, which indicates that $x_{i}(t)$ converges to the bounded set $\{ \{x_{i}\}: \sum_{i=1}^{M} \| x_{i} \|^{2}_{Q_{i}} \le \eta \}$ in finite time. Considering the assumption that $\{ x_{i}\in\mathbb{R}^{n_{i}}: \| x_{i} \|^{2}_{Q_{i}} \!\le\! \eta \}\!\subset \! \mathcal{X}^{f}_{i}$, it can be concluded that $x_{i}(t)$ enters the terminal set $\mathcal{X}^{f}_{i}$ in finite time.
\end{proof}
\vspace{-10pt}
\section{Numerical Simulations} \label{sec:simulation}
In this section, simulation is conducted to demonstrate the performance of the developed method. A group of four linear time-invariant subsystems are considered. The interaction weight matrix $L$ is set as {\small$L_{12}=L_{21}=L_{14}=L_{41}=\frac{1}{4}$, $L_{23}=L_{32}=\frac{3}{8}$, $L_{34}=L_{43}=\frac{5}{16}$, $L_{13}=L_{31}=L_{24}=L_{42}=0$, $L_{11}=-\frac{1}{2}$, $L_{22}=-\frac{5}{8}$, $L_{33}=-\frac{11}{16}$}, and {\small$L_{44}=-\frac{9}{16}$}. 
The system matrices $A_{i}$ and $B_{i}$ are chosen as
\[
{\small
\begin{aligned}
	A_{i} \!=\! \begin{bmatrix}
		1 & 1 \\ 0 & 1
	\end{bmatrix}, B_{i} \!=\! \begin{bmatrix}
		1 \\ 1
	\end{bmatrix}, i\!=\!1, 3;
	A_{i} \!=\! \begin{bmatrix}
		2 & 1 \\ 0 & 1
	\end{bmatrix}, B_{i} \!=\! \begin{bmatrix}
		1 \\ 1
	\end{bmatrix}, i\!=\!2, 4. 
\end{aligned}
}
\]
For all subsystems, the local state and input constraint sets are selected as $\mathcal{X}_{i} \!=\! \{ x_{i}: -1\!\le \! x_{i}\! \le \! 1 \}$ and $\mathcal{U}_{i} \!=\! \{ u_{i}: -0.3\!\le \! u_{i}\!\le \! 0.3 \}$, respectively. The global constraint is $-0.65\!\le \! \sum_{i=1}^{4}u_{i}\!\le \! 0.65$. The weight matrices $Q_{i}$ and $R_i$ are set as $Q_{i}\!=\!I$ and $R_i\!=\!0.1$, respectively. The length of the prediction horizon is chosen as $N=5$. In Algorithm~\ref{algo:dual_privacy}, we inject Laplace noise with parameter $\nu^{k}\!=\!0.1\!+\!0.001k^{0.1}$. The weakening factor sequence and step-size sequence is set as $\chi^{k}=\frac{2}{1+0.01k^{0.9}}$ and $\gamma^{k}=\frac{5}{1+0.1k}$, respectively. In the simulation, Algorithm~\ref{algo:DMPC_privacy} is executed 20 times, and the mean and the variance of the state and input trajectories are computed. For comparison, we also run Algorithm~\ref{algo:DMPC}, which uses Algorithm~\ref{algo:dual} for distributed computation, and apply the same level of noise to the shared variables in Algorithm~\ref{algo:dual}.

The simulation results are illustrated in Fig.~\ref{fig_sim}. Fig.~\ref{fig_x1} depicts the state evolution of subsystem 1 (similar results for other subsystems are omitted). It can be seen that the variance of the system state trajectories under Algorithm~\ref{algo:DMPC} is much larger than those under Algorithm~\ref{algo:DMPC_privacy}. In addition, Fig.~\ref{fig_u} shows the evolution of the global constraint. It can be found that there exist constraint violations in Algorithm~\ref{algo:DMPC}. However, owing to the implementation scheme developed in Section~\ref{sec:MPC}, our approach can guarantee the satisfaction of the global constraint.

\begin{figure}[!t]
\vspace{-5pt}
\centering
\hspace{-5mm}
\subfigure[]{
	\includegraphics[width=1.8 in]{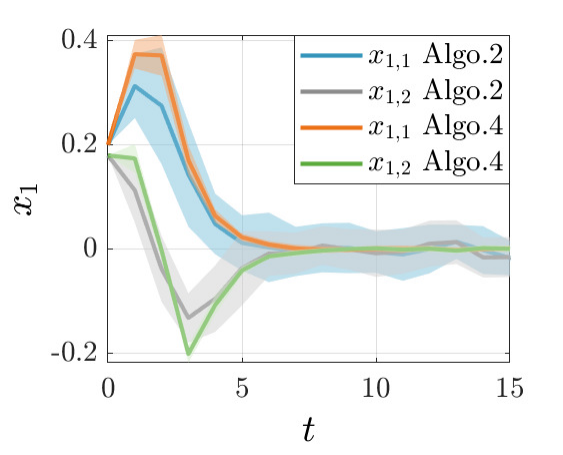} \label{fig_x1}
}
\hspace{-7mm}
\subfigure[]{
	\includegraphics[width=1.8 in]{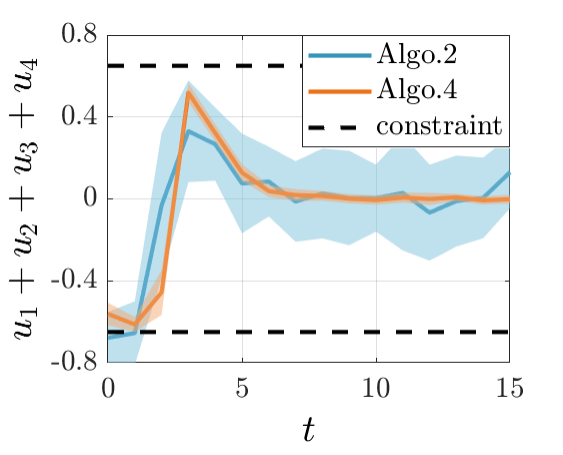} \label{fig_u}
}
\vspace{-5pt}
\caption{Evolution of (a) subsystem 1 and (b) global constraint.}
\label{fig_sim}
\end{figure}
\vspace{-5pt}
\section{Conclusion} \label{sec:conclusion}
This paper developed a differentially private DMPC strategy for linear discrete-time systems with coupled global constraints. We incorporated a DP noise injection mechanism into the distributed dual-gradient algorithm, enabling privacy preservation while maintaining accurate optimization convergence.  Furthermore, a practical implementation approach for DMPC was proposed, which guarantees the feasibility and stability of the closed-loop system. Simulation results validated the effectiveness of the developed privacy-preserving DMPC strategy. Future work will extend the differentially private framework to directed communication networks and systems with uncertainties (e.g., robust DMPCs), and will evaluate the proposed framework on practical applications.

\vspace{-10pt}
\bibliographystyle{IEEEtran}
\bibliography{IEEEabrv,reference}
\end{document}